\theoremstyle{plain} 
\newtheorem{theorem}{Theorem}
\newtheorem*{example*}{Example}
\newtheorem{lemma}[theorem]{Lemma}
\newtheorem*{remark*}{Remark}
\newtheorem{corollary}[theorem]{Corollary}
\newtheorem{proposition}[theorem]{Proposition}
\newenvironment{manualtheorem}[1]{%
  \manualtheoreminner
}{\endmanualtheoreminner}
\newenvironment{manuallemma}[1]{%
  \manuallemmainner
}{\endmanuallemmainner}
\begin{document}

\title{Operation fidelity explored by numerical range of Kraus operators} 
\author{Igor Che{\l}stowski}
\affiliation{Faculty of Physics, University of Warsaw, ul. Pasteura 5, 02-093 Warszawa, Poland}
\author{Grzegorz Rajchel-Mieldzio\'c}
\affiliation{ICFO-Institut de Ciencies Fotoniques, The Barcelona Institute of Science and Technology, Av. Carl Friedrich Gauss 3, 08860 Castelldefels (Barcelona), Spain}
\affiliation{Center for Theoretical Physics, Polish Academy of Sciences, Al. Lotnik\'ow 32/46, 02-668 Warszawa, Poland}
\author{Karol {\.Z}yczkowski}
\affiliation{Institute of Theoretical Physics, Jagiellonian University, ul. {\L}ojasiewicza 11, 30–348 Krak\'ow, Poland}
\affiliation{Center for Theoretical Physics, Polish Academy of Sciences, Al. Lotnik\'ow 32/46, 02-668 Warszawa, Poland}

\date{Jul 11, 2023}

\begin{abstract}
Present-day quantum devices require precise implementation of desired quantum channels. 
To characterize the quality of implementation one uses the average operation fidelity $F$, defined as the fidelity between an initial pure state and its image with respect to the analyzed operation, averaged over an ensemble of pure states.
We analyze the statistical properties of the operation fidelity for low-dimensional channels and study its extreme values and probability distribution.
These results are obtained with the help of the joint numerical range of the set of Kraus operators representing a channel.
Analytic expressions for the density $P\left(F\right)$ are derived in some particular cases including unitary and mixed unitary channels as well as quantum maps represented by commuting Kraus operators. 
We propose a scheme for fidelity-based channel estimation which uses the distribution of operation fidelity. 
\end{abstract}

\maketitle
\section{Introduction}
    Advancement of quantum technologies promises development of quantum devices that shall surpass classical counterparts in realizing certain tasks, e.g.\ factorizing numbers, unstructured search, or performing Fourier transform~\cite{Nielsen_2000}.
    However, in order to construct reliable quantum apparatus, it is necessary to overcome noise-inducing state preparation and measurement errors~\cite{Lidar_2013}. 
    
    The full treatment of these errors requires scientists to understand the relationship between ideal quantum channels $\Phi(\rho)$ and their real-world approximations $\tilde{\Phi}(\rho)$~\cite{Junan_2021,Qiu_2023}.
    One way of quantifying this connection for an invertible channel $\Phi$ is to investigate the resulting composed channel $\Theta(\rho) \coloneqq \Phi^{-1} \circ \tilde{\Phi}(\rho)$. 
    If the realization of the initial channel is perfect, then $\Theta$ is an identity channel, $\Theta(\rho) = \rho$. 
    
    Suppose the realization is no longer ideal; then, the difference between the channel $\Theta$ and an identity channel can be assessed by the operation fidelity, i.e.\ the fidelity $F$ between an initial state $\rho$ and the final state $\Theta(\rho)$. 
    For an identity channel, this fidelity is equal to 1, independently of the state $\rho$.
    Notwithstanding, in the general case, the fidelity can be smaller than unity and will be dependent on the state $\rho$. 
    Therefore, to describe the operation fidelity it is necessary to focus on its statistical characterization, such as the mean, the probability density function (PDF), or its lower/upper bounds. 
    Such statistical consideration may prove useful for channel estimation and discrimination~\cite{Pirandola_2019,Krawiec_2020}, as we shall present in this work by providing an explicit setup. 
    
    Fidelity between quantum states was widely investigated as a measure of distinguishability of quantum states~\cite{Jozsa_1994,Bowdrey_2002,Zyczkowski_2005}.
    Study of the operation fidelity, $F=\bra{\psi}\Phi\left(\ket{\psi}\bra{\psi}\right)\ket{\psi}$, started in 2002 with Nielsen's paper showing the relation between the average fidelity $\left\langle F\right\rangle$ of a quantum channel $\Phi$ and its Kraus operators~\cite{Nielsen_2002}. 
	Then, the variance and higher moments of operation fidelity $F$ were studied~\cite{Magesan_2011}.
	However, the resulting formulas for higher moments are not practical due to their complexity.
	The full probability distribution of unitary qubit channels was provided by Pedersen \emph{et al.}~\cite{Pedersen_2008}, as well as a more general case of trace non-preserving operations $\ket{\psi} \mapsto N\ket{\psi}$, given by normal matrices $N$. 
	Furthermore, it was considered under which conditions two different channels can have the same operation fidelity~\cite{Magesan_2011v2}. 
		
    Maximum operation fidelity of generalized Pauli channels has been found by Siudzi{\'n}ska~\cite{Siudzinska_2019}.
    In the case of a general quantum channel, the operational fidelity was studied by Johnston and Kribs using the Choi matrix of the channel~\cite{Johnston_2010,Johnston_2011v2,Johnston_2011}. 
    Their approach proved fruitful in finding approximate expressions for the minimal operation fidelity via eigenvalues of a related matrix and through semidefinite programming.
    Our approach is different as we have found exact formulae for the extremal operation fidelity in specific cases.
    
    Expanding upon the existing literature -- it was also proposed that a genetic algorithm might help to determine minimal operation fidelity~\cite{Gregor_2018}. 
    Optimal strategies to evaluate average operation fidelity were analyzed in~\cite{Reich_2013}.
    Finally, the asymptotic behavior of the operation fidelity as the number of copies of the system increases is shown to be lower-bounded by the $\infty$-norm of the channel~\cite{Ernst_2017}.
    
    The aim of the present investigation is to fill in the missing blocks regarding statistical distribution and bounds on the operation fidelity of quantum gates in special cases.
    Our approach to the problem is characterized by investigating the connection of operation fidelity to joint numerical ranges~\cite{Bonsall_1971}.
    The numerical range of a matrix $A$ is the set of all possible complex values $\bra{v}A\ket{v}$ among normalized vectors $\ket{v}$.
    Likewise, the joint numerical range $W$ of multiple matrices $A_1$,...,$A_n$ is a set of $n$-tuples $(\bra{v}A_1\ket{v},...,\bra{v}A_n\ket{v})$ among all normalized states $\ket{v}$.
    The general rationale of our setup consists in dividing the Kraus operators of a given channel into Hermitian and anti-Hermitian parts.
    To evaluate the distribution $P(F)$ of the operation fidelity of a channel acting on a $d$-dimensional system one needs to perform an averaging over the $2(d-1)$-dimensional set of quantum pure states, which reduces to the Bloch sphere for $d=2$. 
    We demonstrate that it is convenient to replace such an integral by the integration with a suitable measure over the joint numerical range $W$ of the Hermitian and anti-Hermitian parts of the Kraus operators $K_j$ associated with the investigated operation.
		
	The paper is organized as follows. 
	First, in Subsection~\ref{subsec:basic_definitions} we introduce the necessary definitions and prove the connection between the joint numerical ranges and the operation fidelity in Subsection~\ref{subsec:applications_of_numerical_range}.
	Then, Subsection~\ref{subsection:probabilistic_estimation} explains the motivation by introducing a scheme for probabilistic channel estimation.
    In Section~\ref{sec:PDF_for_qubit_qutrit}, we derive simple expressions for the distribution of the operation fidelity for exemplary single qubit channels.
    Later on, we show how the notions of the numerical range and the numerical shadow are useful to obtain analogous distributions for unitary gates acting on a $d=3$ system. 
	Subsequent Section~\ref{sec:mixed_unitary} dwells on the study of mixed unitary qubit channels. 
    In Section~\ref{sec:diagonal_Kraus}, we study channels with diagonal Kraus operators. 
    By using a quadratic form of the operation fidelity we find its global minimum.
    Finally, in Section~\ref{sec:proposed_method}, using the joint numerical shadows of commuting Hermitian operators we find the distribution of the operation fidelity for Schur channels (Subsection~\ref{subsec:Schur_channels}).
    This method allows us to establish an analogous result for an arbitrary unitary channel in Subsection~\ref{subsec:arbitrary_unitary_channel}.
	
\section{Operation fidelity, numerical range and estimation of channels}\label{sec:numerical_range_fidelity}
	In this section, we shall formulate propositions that lay ground for our investigation of the connection between the numerical range and the operation fidelity of quantum channels.
	
    \subsection{Numerical range of an operator}\label{subsec:basic_definitions}
        For~a~linear operator $L$ acting on~Hilbert space $\mathds{C}^d$, where $d\in\mathds{N}$, the numerical range $W\left(L\right)$ is defined as
        \begin{equation}
            W\left(L\right)=\left\{\left\langle\psi\right|L\left|\psi\right\rangle \mathrm{,where}\;\left|\psi\right\rangle\in\mathds{C}^d,\left\langle\psi|\psi\right\rangle=1\right\}.
        \end{equation}
        The numerical range is a~subset of~the~complex plane $\mathds{C}$, which can be treated as $\mathds{R}^2$. 
        Any linear operator $L$ acting on~$\mathds{C}^d$ can be expressed as a~sum of~Hermitian and~anti-Hermitian parts; $L=H+\mathrm{i}A$, where $H$ and~$A$ are Hermitian. 
        This means that $W\left(L\right)$ can be equivalently written as
        \begin{widetext}
        \begin{equation}
            W\left(L\right)=W\left(H,A\right)=\left\{\left(\left\langle\psi\right|H\left|\psi\right\rangle,\left\langle\psi\right|A\left|\psi\right\rangle\right)\in\mathds{R}^2\mathrm{,where}\;\left|\psi\right\rangle\in\mathds{C}^d,\left\langle\psi|\psi\right\rangle=1\right\},
        \end{equation}
        which can be easily generalized to the joint numerical range of a collection $\vec H$ of $n$ Hermitian operators,  $\vec H =
        \{H_1,H_2,\ldots,H_n \}$ in~the~following way:
        \begin{equation}
            W ({\vec  H } )= 
            \left\{\left\langle\psi\right|\vec{H}\left|\psi\right\rangle\in\mathds{R}^n\mathrm{,where}\;\left|\psi\right\rangle\in\mathds{C}^d,\left\langle\psi|\psi\right\rangle=1\right\},
        \end{equation}
        \end{widetext}
        with $\left\langle\psi\right|\vec{H}\left|\psi\right\rangle=\left(\left\langle\psi\right|H_1\left|\psi\right\rangle,\left\langle\psi\right|H_2\left|\psi\right\rangle,\ldots,\left\langle\psi\right|H_n\left|\psi\right\rangle\right)$.
        
        Following Refs.~\cite{Bonsall_1971,Gustafson_1997}, let us recall the definitions of the joint numerical radius $w$ and the Crawford number $c$ of a collection of $n$ Hermitian operators $H_j$ as 
        \begin{equation*}
            \begin{split}
                w(\vec{H})&=\sup\left\{\left|\vec{r}\right|\mathrm{,where}\;\vec{r}\in W(\vec{H})\right\},\\
                c(\vec{H})&=\inf\left\{\left|\vec{r}\right|\mathrm{,where}\;\vec{r}\in W(\vec{H})\right\},
            \end{split}
        \end{equation*}
        where $\left|\cdot\right|$ denotes the Euclidean norm.
    
        The numerical shadow $P_L\left(z\right)$ of~a~linear operator $L$ acting on~Hilbert space $\mathds{C}^d$ is defined as the probability distribution induced on~the~numerical range $W\left(L\right)$ by~the~unique unitarily invariant probability distribution $\mu\left(\psi\right)$ on~the~space of~all~pure states $\Omega_d$~\cite{Gallay_2012,Dunkl_2011}. 
        In~other words, it represents the~likelihood that a~randomly chosen pure state has an expectation value of~$L$ equal to~any~given complex number $z$. It~is given by~the~following general expression,
        \begin{equation}
            P_L\left(z\right)=\int_{\Omega_d}\delta\left(z-\left\langle\psi\right|L\left|\psi\right\rangle\right)\mathrm{d}\mu\left(\psi\right),
        \end{equation}
        where $\delta$ denotes the standard Dirac delta.
        In analogy to the numerical range, one can also introduce the joint numerical shadow of~a~collection of~$n$ Hermitian operators~\cite{Gutkin_2013}:
        \begin{equation}\label{eq:def_joint_numerical_shadow}
            P_{H_1,H_2,\ldots,H_n}\left(\vec{r}\right)=\int_{\Omega_d}\delta\left(\vec{r}-\left\langle\psi\right|\vec{H}\left|\psi\right\rangle\right)\mathrm{d}\mu\left(\psi\right),
        \end{equation}
        where $\vec{r}$ is a~position in~the~$n$-dimensional space in~which $W(\vec{H})$ is embedded.
        
    \subsection{Applications of numerical range for~quantum channel fidelity}\label{subsec:applications_of_numerical_range}
    Before we introduce the applications of the numerical range, let us define an equivalence relation in the set of all channels. 
    We say that two channels belong to the same equivalence class if the distribution of their operation fidelity is the same.
    This definition is non-trivial in the sense that two distinct quantum channels can have the same distribution of operation fidelity (see Appendix~\ref{app:same_fidelity}).

    Let $\Phi$ be a~quantum channel acting on a system of dimension $d$ and defined by a collection of $m$ Kraus operators $K_j$,
        \begin{equation}\label{eq:definition_Kraus}
			\Phi\left(\rho\right)=\sum_{j=1}^m K_j\rho K^\dagger_j,
		\end{equation}
  which satisfy the trace preserving condition: 
  $\sum_j K_j^{\dagger} K_j={\mathbbm I}$.
   The operation fidelity with respect to the initial pure state $\ket{\psi}$ is then defined as
		\begin{equation}\label{eq:fidelity_Kraus}
	        F\left(\Phi,\left|\psi\right\rangle\right)=\left\langle\psi\right|\Phi\left(\left|\psi\right\rangle\left\langle\psi\right|\right)\left|\psi\right\rangle=\sum_{j=1}^m\left|\left\langle\psi\right|K_j\left|\psi\right\rangle\right|^2.
		\end{equation}
		Each Kraus operator can be decomposed into Hermitian and anti-Hermitian parts: $K_j=H_j+\mathrm{i}A_j$, where all $H_j$ and~$A_j$ are Hermitian. Then
		\begin{equation}
	        F\left(\Phi,\left|\psi\right\rangle\right)=\sum_{j=1}^m\left\langle\psi\right|H_j\left|\psi\right\rangle^2+\sum_{j=1}^m\left\langle\psi\right|A_j\left|\psi\right\rangle^2.
		\end{equation}
        This means that the equivalence class remains the same upon replacing each Kraus operator $K_j$ with~a~pair of~operators $H_j$ and~$A_j$. 
        As a~result, for~the~purposes of~analyzing properties of the fidelity, it is sufficient to~only consider channels described by~Hermitian Kraus operators. 
        In~such a~case, the operation fidelity is given by the squared Euclidean length of~the~vector $\left\langle\psi\right|\vec{K}\left|\psi\right\rangle$. 
        Its minimum and~maximum are therefore equal to~the~squares of~the Crawford number and~the joint numerical radius of a collection of operators $\left(H_1,H_2,\ldots,H_m,A_1,A_2,\ldots,A_m\right)$, respectively.
        Finally, we are able to conclude the section with propositions that shall prove useful in subsequent investigations.
	
	\begin{proposition}
	For any operation $\Phi$ acting on a system of dimension $d$ and given by a set of $m$ Kraus operators $\{K_1,\ldots,K_m\}$, each with the Hermitian and anti-Hermitian parts $K_j=H_j +\mathrm{i} A_j$, the maximal $F_\text{max}$ and minimal $F_\text{min}$ operation fidelity can be written, respectively, as the squares of the joint numerical radius and the Crawford number:
	\begin{equation}
         F_\text{max}=\left[w\left(H_1,\ldots,H_m,A_1,\ldots,A_m\right)\right]^2,
	\end{equation}
	and
	\begin{equation}
         F_\text{min}=\left[c\left(H_1,\ldots,H_m,A_1,\ldots,A_m\right)\right]^2.
	\end{equation}
	\end{proposition}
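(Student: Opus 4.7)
The proof is essentially a direct unpacking of the definitions given in Subsection~\ref{subsec:applications_of_numerical_range}, so my plan is to make that unpacking precise rather than to invoke any deeper machinery. I would structure it in three short steps: (i) rewrite the fidelity as a squared Euclidean norm of a point in a joint numerical range, (ii) identify the extrema of the norm with $w^2$ and $c^2$, and (iii) observe that the extrema are attained, so $\sup/\inf$ can be written as $\max/\min$.

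For step (i), I would start from the definition $F(\Phi,|\psi\rangle)=\sum_{j=1}^{m}|\langle\psi|K_j|\psi\rangle|^2$ and substitute $K_j=H_j+\mathrm{i}A_j$. Because $H_j$ and $A_j$ are Hermitian, $\langle\psi|H_j|\psi\rangle$ and $\langle\psi|A_j|\psi\rangle$ are real, so
\begin{equation*}
|\langle\psi|K_j|\psi\rangle|^2=\langle\psi|H_j|\psi\rangle^{2}+\langle\psi|A_j|\psi\rangle^{2}.
\end{equation*}
Summing over $j$ shows that $F(\Phi,|\psi\rangle)$ equals $|\vec{r}(\psi)|^2$, where $\vec{r}(\psi)=\langle\psi|\vec{H}|\psi\rangle$ is the point in the joint numerical range $W(H_1,\ldots,H_m,A_1,\ldots,A_m)$ associated with $|\psi\rangle$.

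For step (ii), I would take the supremum (resp.\ infimum) of $|\vec{r}(\psi)|^2$ over normalized $|\psi\rangle\in\mathds{C}^d$. Since squaring is monotone on $[0,\infty)$ and all Euclidean norms are non-negative, $\sup_\psi |\vec{r}(\psi)|^2=(\sup_\psi |\vec{r}(\psi)|)^2$ and analogously for the infimum. By the definitions recalled earlier in the excerpt, these are exactly $w(H_1,\ldots,H_m,A_1,\ldots,A_m)^2$ and $c(H_1,\ldots,H_m,A_1,\ldots,A_m)^2$, which yields the two claimed identities.

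For step (iii), I would remark that the set of normalized pure states in $\mathds{C}^d$ is compact and the map $|\psi\rangle\mapsto\vec{r}(\psi)$ is continuous, so $W(\vec{H})$ is a compact subset of $\mathds{R}^{2m}$; hence the Euclidean norm attains its supremum and infimum on $W(\vec{H})$, and the $\sup/\inf$ in the definitions of $w$ and $c$ can be replaced by $\max/\min$, matching the notation $F_\text{max}$ and $F_\text{min}$ in the statement. There is essentially no hard step here: the only point that requires a moment of care is the first equality, where one must use Hermiticity to ensure that the real and imaginary parts of $\langle\psi|K_j|\psi\rangle$ are precisely $\langle\psi|H_j|\psi\rangle$ and $\langle\psi|A_j|\psi\rangle$; everything else is a rewriting of the definitions of $w$ and $c$.
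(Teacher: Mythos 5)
Your proposal is correct and follows essentially the same route as the paper: the text preceding the proposition already decomposes $K_j=H_j+\mathrm{i}A_j$, rewrites $F$ as the squared Euclidean length of $\left\langle\psi\right|\vec{K}\left|\psi\right\rangle$, and identifies the extrema with $w^2$ and $c^2$ by definition. Your step (iii), justifying that the supremum and infimum are attained via compactness of the pure-state set and continuity of $|\psi\rangle\mapsto\vec{r}(\psi)$, is a small rigor-adding remark the paper leaves implicit, but it does not change the argument.
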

	
	Analogously, using the numerical shadow we can evaluate the probability distribution of the operation fidelity.
	
	\begin{proposition}\label{prop:kraus_fidelity}
	For any operation $\Phi$ given by a set of Kraus operators $\{K_1,\ldots,K_m\}$, decomposed into their Hermitian and anti-Hermitian parts, $K_j=H_j +\mathrm{i} A_j$, probability density function of the operation fidelity $P\left(F\right)$ is given by 
	\begin{equation}\label{eq:fid_dist_from_nom_shad}
	    P(F) = \int_W P_{H_1,\ldots,H_m,A_1,\ldots,A_m}\left(\vec{r}\right)\delta(F-|\vec{r}|^2)\mathrm{d}^{2m}r,
	\end{equation}
	where $W=W\left(H_1,\ldots,H_m,A_1,\ldots,A_m\right)\subset\mathbb{R}^{2m}$ denotes the joint numerical range of $m$ Kraus operators, while $P_{H_1,\ldots,H_m,A_1,\ldots,A_m}\left(\vec{r}\right)$ is the joint numerical shadow supported in $W$.
	\end{proposition}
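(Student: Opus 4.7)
The plan is to view $P(F)$ as the pushforward of the unitarily invariant measure $d\mu(\psi)$ on $\Omega_d$ under the fidelity function, and factor that pushforward through the joint numerical range. Concretely, I start from the definition of a probability density,
\begin{equation*}
P(F) = \int_{\Omega_d} \delta\bigl(F - F(\Phi,|\psi\rangle)\bigr)\, d\mu(\psi),
\end{equation*}
and invoke the identity established just before the proposition, namely $F(\Phi,|\psi\rangle) = \sum_{j=1}^m \langle\psi|H_j|\psi\rangle^2 + \sum_{j=1}^m \langle\psi|A_j|\psi\rangle^2 = |\vec{r}(\psi)|^2$, where $\vec{r}(\psi) := \langle\psi|\vec{H}|\psi\rangle \in \mathbb{R}^{2m}$ is the vector whose components are the expectation values of the $2m$ Hermitian operators $(H_1,\ldots,H_m,A_1,\ldots,A_m)$. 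By construction $\vec{r}(\psi) \in W$ for every normalized $|\psi\rangle$.

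Next I insert the trivial resolution of the identity on $W$,
\begin{equation*}
1 = \int_W \delta^{(2m)}\!\bigl(\vec{r}-\vec{r}(\psi)\bigr)\, d^{2m}r,
\end{equation*}
which is valid because $\vec{r}(\psi) \in W$. Substituting this into the previous display and invoking Fubini's theorem to interchange the order of integration, I obtain
\begin{equation*}
P(F) = \int_W \delta(F-|\vec{r}|^2)\left[\int_{\Omega_d} \delta^{(2m)}\!\bigl(\vec{r}-\vec{r}(\psi)\bigr)\, d\mu(\psi)\right] d^{2m}r,
\end{equation*}
where I also used $|\vec{r}(\psi)|^2 = |\vec{r}|^2$ inside the support of the inner $\delta$. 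The bracketed integral is, by Eq.~\eqref{eq:def_joint_numerical_shadow}, exactly the joint numerical shadow $P_{H_1,\ldots,H_m,A_1,\ldots,A_m}(\vec{r})$, which yields the claimed formula~\eqref{eq:fid_dist_from_nom_shad}.

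The proof is essentially a disintegration argument: the fidelity map $\psi \mapsto F$ factors as $\psi \mapsto \vec{r}(\psi) \mapsto |\vec{r}|^2$, the pushforward along the first arrow is by definition the joint numerical shadow, and the pushforward along the second arrow is the remaining $\delta(F-|\vec{r}|^2)$ integration. The only non-routine point is justifying the exchange of the $\Omega_d$ and $W$ integrals; since $\mu$ is a finite measure and $W$ is a compact subset of $\mathbb{R}^{2m}$, Fubini applies once the Dirac $\delta$ distributions are regularized (e.g.\ as weak limits of Gaussians) and the limit is taken after the exchange, so I expect no real obstacle there.
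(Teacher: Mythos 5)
Your proposal is correct and coincides with the argument the paper itself relies on: the paper states Proposition~\ref{prop:kraus_fidelity} without a separate proof, treating it as immediate from the identity $F=\sum_j\langle\psi|H_j|\psi\rangle^2+\sum_j\langle\psi|A_j|\psi\rangle^2$ together with the definition~(\ref{eq:def_joint_numerical_shadow}) of the joint numerical shadow as the pushforward of $\mu$ under $\psi\mapsto\vec{r}(\psi)$. Your delta-insertion and Fubini step merely make explicit the disintegration the authors leave implicit, so this is essentially the same proof, carefully written out.
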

	
	Starting from Proposition~\ref{prop:kraus_fidelity}, we derive the probability distribution of the operation fidelity for unitary channels in an alternative way. 
    This constitutes another characterization of operation fidelity equivalence classes.
    
    \begin{proposition}\label{prop:unitary_decomposition}
    Operation fidelity distribution of a unitary channel\, $U$ of an arbitrary dimension $d$ is identical to the operation fidelity distribution of the channel defined by two Kraus operators $K_1 = \mathrm{Herm}\left(U\right) = \frac{U+U^\dagger}{2}$ and $K_2 = \mathrm{Anti}\left(U\right) = \frac{U-U^\dagger}{2\mathrm{i}}$.
    \end{proposition}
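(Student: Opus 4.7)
The plan is to show that the operation fidelity, viewed as a function of the input state $\psi$, is \emph{identical} for the two channels in question, after which equality of distributions follows immediately by integrating against the unique unitarily invariant measure on pure states as in Proposition~\ref{prop:kraus_fidelity}. The whole claim thus reduces to a pointwise identity, supplemented by a check that the proposed two-operator map is a legitimate quantum channel.

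First, I would expand the fidelity of the unitary channel, which has the single Kraus operator $U$, as $F(U,\psi) = |\langle\psi|U|\psi\rangle|^2$. Writing $U = H + \mathrm{i}A$ with $H = \mathrm{Herm}(U)$ and $A = \mathrm{Anti}(U)$ both Hermitian, the expectation values $\langle\psi|H|\psi\rangle$ and $\langle\psi|A|\psi\rangle$ are real, so
\begin{equation*}
F(U,\psi) = \langle\psi|H|\psi\rangle^2 + \langle\psi|A|\psi\rangle^2.
\end{equation*}
Applying formula~\eqref{eq:fidelity_Kraus} directly to the two-Kraus channel defined by $K_1 = H$ and $K_2 = A$ produces the same expression (once more using Hermiticity to drop the absolute value). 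Hence the two fidelities coincide as functions of $\psi$, and the corresponding probability densities obtained via \eqref{eq:fid_dist_from_nom_shad} must agree.

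The only nontrivial point, which I expect to be the sole genuine obstacle, is verifying that $\{H,A\}$ really forms a CPTP map, i.e.\ that $H^2 + A^2 = I$, so that Proposition~\ref{prop:kraus_fidelity} is applicable on the right-hand side. I would handle this by invoking normality of $U$: unitarity gives $UU^\dagger = U^\dagger U$, and a direct expansion of both sides yields $2\mathrm{i}[A,H] = 0$, hence $[H,A]=0$. Then
\begin{equation*}
I = UU^\dagger = (H+\mathrm{i}A)(H-\mathrm{i}A) = H^2 + A^2 - \mathrm{i}[H,A] = H^2 + A^2,
\end{equation*}
as required. All other steps are routine algebra, so the argument should be short and self-contained.
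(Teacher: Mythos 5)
Your proof is correct and follows essentially the same route as the paper: Subsection~\ref{subsec:applications_of_numerical_range} establishes exactly the pointwise identity $\left|\left\langle\psi\right|U\left|\psi\right\rangle\right|^2=\left\langle\psi\right|H\left|\psi\right\rangle^2+\left\langle\psi\right|A\left|\psi\right\rangle^2$, from which the proposition follows by integration over the invariant measure. Your additional verification that $H^2+A^2=\mathbb{I}_d$ (via $[H,A]=0$ from unitarity) is a worthwhile touch the paper leaves implicit, and it correctly identifies why the Hermitian/anti-Hermitian replacement yields a genuine trace-preserving channel in the unitary case.
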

	
	These constructive results show that several properties of a quantum channel can be derived analyzing the joint numerical range of its Kraus operators.
	
	\subsection{Probabilistic estimation of channels}\label{subsection:probabilistic_estimation}
   Consider a family of channels $\{\Phi_\theta\}$. 
    Channel estimation, in its full generality, 
    consists in determination of the parameter $\theta$. 
    This notion is also known in the literature as channel identification or channel discrimination, if there are two channels to choose from~\cite{Sidhu_2020}.
    This problem is related to the quantum Fisher information, which quantifies the achievable precision of quantum channel estimation. 
    
    The simplest way to estimate a channel through quantum process tomography is not practical because of the enormous number of measurements required even in the simplest settings~\cite{DeMartini_2003}.
    Due to their usefulness in quantum metrology, a plethora of different variants of channel estimation have been discussed in the past 20 years~\cite{Sarovar_2006,Liang_2002,Kitaev_1996,Rudolph_2003,Fujiwara_2003,Fujiwara_2004,Sasaki_2002,Zheng_2014}. 
    Here, we wish to motivate yet unexplored setup that uses operational fidelity. 

    Suppose we are given an unknown channel $\Phi_\theta$ and a time-unvarying source of the same pure state $\ket{\psi}$. 
    Assume we do not know what is the exact state, as it is chosen at random according
    to the Haar measure on the corresponding Hilbert space. 
    Our goal is to estimate the parameter $\theta$, but since state $\ket{\psi}$ is beyond our control, usual channel estimation techniques are not helpful. 
    This leaves us with three basic possibilities: (a) quantum state tomography of both input and output states, (b) using the known experimental setup for finding the fidelity between input and output states~\cite{Hendrych_2003}, or (c) some more general scenario involving application of the channel in a sequential way. 
    
    Here, we shall restrict our investigation to option (b), since tomography forms an already widely-studied subfield.
    With this analysis, we do not aim to verify, whether the fidelity-based channel estimation is the most efficient one in the general case, which is outside of our primary inspection.
    Rather, we wish to introduce a novel framework that might be useful for channel estimation for specific laboratory setups.  

    As an example of the discrimination scenario, consider two channels: $\Phi$ and its noisy version $BF\circ\Phi$, where $BF$ corresponds to a bit flip, realized with Kraus operators $\{\sqrt{(1-p)}\,\mathbb{I}, \sqrt{p}\, \sigma_x\}$ with noise parameter $p$.
    Let $\Phi_{?}$ denote the one of these two channels we have at our disposal. 
    
    We start by using the source to produce the unknown pure state $\ket{\psi}$, then we wish to evaluate its fidelity $F_\psi = \bra{\psi}\Phi_{?}(\ket{\psi}\bra{\psi})\ket{\psi}$ with the final state after action of the channel. 
    To do so, we need several copies of both the initial and the final state~\cite{Hendrych_2003}; therefore, we apply the channel many times on different copies. 
    Having estimated the operation fidelity
    for random input states, we compare the densities of the probability of fidelity $P_{\Phi}(F_\psi)$ and $P_{BF\circ\Phi}(F_\psi)$ for a given value of $F_\psi$. 
    Finally, if $P_{\Phi}(F_\psi) > P_{BF\circ\Phi}(F_\psi)$ then we estimate the channel to be $\Phi$ -- in the opposite case, the estimator is $BF\circ\Phi$.
    
    To show the efficacy of this setup, suppose $\Phi$ is a Haar-random qubit unitary channel. 
    We assume that with probability 50\% the channel is ideal and with probability 50\% the channel is composed with bit-flip noise of error parameter $p$.
    Using the setup specified above, the probability of correct discrimination for a random qubit unitary channel can be evaluated to be
    \begin{equation}\label{eq:prob_discrimination}
        P_\mathrm{corr} = \bigg\langle\frac{1}{2} + \frac{1}{4}\int_0^1 |P_{U}(F) - P_{BF\circ U}(F)|dF \bigg\rangle_{U(2)},
    \end{equation}
    where the integral has an interpretation of the area of non-overlapping regions under the distribution curves. 
    The average over the unitary group of size 2 is denoted by the outermost $\langle \,\,\rangle_{U(2)}$.

    The average quality of channel discrimination for this task, depending on the noise parameter $p$, is depicted in Fig.~\ref{fig:discrimination}. 
    Starting from an expected 50\% chance for discrimination success between two equal channels \mbox{($p=0$)}, the probability of success settles a little above 75\% for $p\geq 0.4$. 
    Since the resources available for this task are different from the standard channel discrimination, it is not possible to compare the efficiencies.

 \begin{center}
	\begin{figure}[H]
	\centering
    \begin{tikzpicture}[scale=0.65, every node/.style={scale=0.63}]
        \newcommand{\deltax}{2.28} 
	\node (a) at (0,0) {\includegraphics[width=0.7\textwidth]{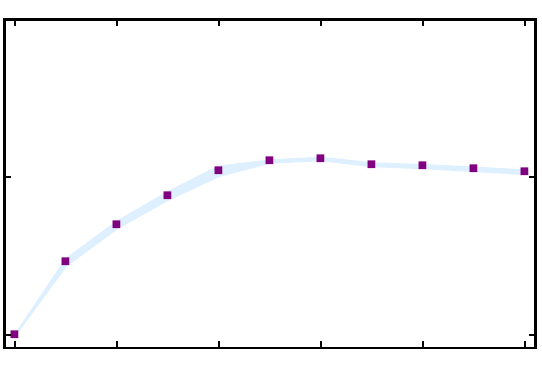}};
	\node (a) at (0,-4.4) {\scalebox{1.54}{$p$}};
        \node (a) at (-5.7,-4.1) {\scalebox{1.54}{$0$}};
	\node (a) at (-5.7+\deltax,-4.1) {\scalebox{1.54}{$0{.}2$}};
	\node (a) at (-5.7+2*\deltax,-4.1) {\scalebox{1.54}{$0{.}4$}};
        \node (a) at (-5.7+3*\deltax,-4.1) {\scalebox{1.54}{$0{.}6$}};
        \node (a) at (-5.7+4*\deltax,-4.1) {\scalebox{1.54}{$0{.}8$}};
	\node (a) at (-5.7+5*\deltax,-4.1) {\scalebox{1.54}{$1$}};
	\node (a) at (-6.5,1.8) {\scalebox{1.54}{\begin{turn}{90}$P_\mathrm{corr}$\end{turn}}};
	\node (a) at (-6.49,-3.4) {\scalebox{1.54}{$0.5$}};
	\node (a) at (-6.59,0.1) {\scalebox{1.54}{$0{.}75$}};
        \node (a) at (-6.3,3.6) {\scalebox{1.54}{$1$}};
    \end{tikzpicture}
    \caption{The average probability of correct discrimination $P_\mathrm{corr}$, given by Eq.~(\ref{eq:prob_discrimination}), between random qubit unitary channel $U$ and its noisy version $BF\circ U$, using the fidelity-based scenario explained in the text. 
    Horizontal axis represents the strength $p$ of the bit-flip noise, given by the Kraus operators of the bit-flip channel $\{\sqrt{(1-p)}\,\mathbb{I}, \sqrt{p}\, \sigma_x\}$.
    Light blue background denotes the standard error of the mean.}\label{fig:discrimination}
	\end{figure}
    \end{center}

    Similarly, one can analyze discrimination between any two channels, provided they belong to different equivalence classes, introduced in Subsection~\ref{subsec:applications_of_numerical_range}. 
    However, generic quantum channels will typically have different distributions of the operation fidelity: thus analyzing $P(F)$ can be decisive to demonstrate that these two channels belong to different equivalence classes. 
    It is also thinkable to expand this setup into a general channel estimation scheme, with more than two possible channels to choose from.
    
    Finally, we would like to convince the reader that experimental evaluation of fidelity between unknown states is 
    a feasible task.
    One example of~such an~experimental setup uses the following expression for fidelity between two states $\rho$ and~$\sigma$ on~a~Hilbert space $\mathcal{H}$, of~which at~least one is pure~\cite{Hendrych_2003},
    \begin{equation}
        F\left(\rho,\sigma\right)=\mathrm{Tr}\left(\Pi^+ \rho\otimes\sigma\right)-\mathrm{Tr}\left(\Pi^- \rho\otimes\sigma\right),
    \end{equation}
    where $\Pi^+$ and~$\Pi^-$ are projections onto symmetric and~anti-symmetric subspaces of~$\mathcal{H}\otimes\mathcal{H}$, respectively.
    Thus, in~order to~measure $F\left(\rho,\sigma\right)$, it is sufficient to~measure expectation values of~observables $\Pi^+$ and~$\Pi^-$ on~the~state $\rho\otimes\sigma$. If $\rho$ and~$\sigma$ are polarization states of photons, projections $\Pi^+$ and~$\Pi^-$ can be realized by~a~beam splitter, which can distinguish between symmetric and~anti-symmetric states of photon pairs. 
    In addition to this experimental setup, recently several algorithms for determining fidelity were introduced, using machine learning, hybrid algorithms, or shadow tomography~\cite{Wang_2023,Cerezo_2020,Zhang_2021,urRehman_2022,Flammia_2011,daSilva_2011,Huang_2020,Miszczak_2009,Guhne_2007}.

    To sum up, since it is possible to~experimentally measure fidelity between two quantum states, by experimental samples of \emph{state fidelity} between input and output states,  
    we are able to determine
    with a high probability, 
    to which equivalence class the channel belongs by the mean of the \emph{channel fidelity} distribution. 
    This exemplifies the usefulness of the study of the distribution of the operation fidelity for channel estimation.
    
\section{Probability distributions for~operation fidelity \\ (for \texorpdfstring{$d=2$}{Lg} and \texorpdfstring{$d=3$}{Lg})}\label{sec:PDF_for_qubit_qutrit}
    
	There are several ways to~obtain fidelity probability distributions for~quantum channels. For~qubit channels, it can be done by~parametrizing the~space of~pure states as $\left|\psi\right\rangle=\cos\frac{\theta}{2}\left|0\right\rangle+\cos\frac{\theta}{2}\mathrm{e}^{\mathrm{i}\phi}\left|1\right\rangle$ with probabilistic measure given by~$\frac{1}{4\pi}\sin\theta\mathrm{d}\theta\wedge\mathrm{d}\phi$, then changing variables so~that one of~them is operation fidelity and~integrating the~measure over the~second variable. 
	In~higher dimensions (for example, for~qutrits) this method may be difficult to~use (due to the more complicated geometry of~the~pure state space and~probabilistic measure), so taking advantage of the properties of the numerical shadow will prove useful.
 
    This simple method allows one to obtain statistical properties for single-qubit channels, as shown in Table~\ref{tab:special_qubit_channels} for three examples:
    a unitary channel, an exemplary phase damping channel whose Kraus operators are orthogonal projections and a channel defined by 2 nilpotent Kraus operators.
	The results were acquired by the numerical range method sketched above.
 
    \onecolumngrid    
    
	\begin{table*}[h]
	    \centering
        \begin{tabular}{|c|c|c|c|c|}
        \hline
         Kraus operators & Density $P(F)$ & Support of $P(F)$ & Average $\left\langle F\right\rangle$ & Standard deviation $\sqrt{\left\langle F^2\right\rangle-\left\langle F\right\rangle^2}$ \\ \hline
         a) $U=\ket{0}\!\bra{0}+\mathrm{e}^{\mathrm{i}\alpha}\ket{1}\!\bra{1}$ & $\left[\left(1-\cos\alpha\right)\left(2F-1-\cos\alpha\right)\right]^{-1/2}$ & $\left[\left(1+\cos\alpha\right)/2,1\right]$ & $\left(2+\cos\alpha\right)/3$ & $\left(2\sin^2\frac{\alpha}{2}\right)/\left(3\sqrt{5}\right)$ \\ \hline
         b) $K_1=\ket{0}\!\bra{0}$, $K_2=\ket{1}\!\bra{1}$ & 	$\left(2F-1\right)^{-1/2}$ & $\left[1/2,1\right]$ & $2/3$ & $1/\left(3\sqrt{5}\right)$ \\ \hline
         c) $K_1 = \ket{0}\!\bra{1}$, $K_2=\ket{1}\!\bra{0}$ & $\left(1-2F\right)^{-1/2}$ & $\left[0,1/2\right]$ & $1/3$ & $1/\left(3\sqrt{5}\right)$ \\ \hline
        \end{tabular}
        \caption{Statistical properties of the operation fidelity $F$ for exemplary single-qubit channels: a) unitary channel, b) orthogonal projection, and c) bit-flip channel. The probability density function of the operation fidelity is denoted by $P\left(F\right)$.}\label{tab:special_qubit_channels}
    \end{table*}
    \twocolumngrid

    Note that, in case a) for $\alpha=\frac{\pi}{2}$, the distribution of the operation fidelity is the same as in case b). This follows from the fact that in occasion a) matrix $U=\ket{0}\!\bra{0}+\mathrm{e}^{\frac{\mathrm{i}\pi}{2}}\ket{1}\!\bra{1}=\ket{0}\!\bra{0}+\mathrm{i}\ket{1}\!\bra{1}=K_1+\mathrm{i}K_2$, where $K_1$ and $K_2$ are the Kraus operators from instance b). Proposition~\ref{prop:unitary_decomposition} implies that for $\alpha=\frac{\pi}{2}$ the quantum channels from cases a) and b) have identical distributions of operation fidelity. 
    Note that, although the channels are different, 
    they belong to the same equivalence class defined in Subsection~\ref{subsec:applications_of_numerical_range}.
	
	In the following subsections, we shall consider the study of qubit and qutrit channels, $\rho \mapsto \sum_j K_j \rho K^\dagger_j$, given by Kraus operators $K_j$.
	The method presented here works for $d=2$. For larger dimensions one needs to use  the general expression (\ref{eq:fid_dist_from_nom_shad}) and integrate over the joint numerical range $W\left(K_1,\ldots,K_m\right)$.

\subsection{Mixed unitary channels acting on a single qubit system}\label{sec:mixed_unitary}
	A~mixed unitary channel is~a~quantum channel described by~$m$ Kraus operators~(\ref{eq:definition_Kraus}), each proportional to~a~unitary operator:
        \begin{equation}
            \Phi\left(\rho\right)=\sum_{j=1}^m p_j U_j \rho U_j^\dagger,
        \end{equation}
    where $p_j$ are non-negative real numbers satisfying $\sum_{j=1}^m p_j=1$. 
    Using the Pauli matrices $\vec{\sigma}$, we can parametrize the unitary operators as rotations around 3-dimensional normalized vectors $\vec{n}_j$ by angles $\theta_j$
        \begin{equation}
            U_j=\mathrm{e}^{\mathrm{i}\frac{\theta_j}{2}\vec{n}_j\cdot\vec{\sigma}}=\cos\frac{\theta_j}{2}\mathds{1}+\mathrm{i}\sin\frac{\theta_j}{2}\vec{n}_j\cdot\vec{\sigma}.
        \end{equation}
    
    
    The operation fidelity $F$ of channel $\Phi$ for a pure state with a Bloch vector $\vec{b}$ can be expressed through Eq.~(\ref{eq:fidelity_Kraus}),
    \begin{equation}\label{eq:qubit_mixed_unitary_fidelity}
        F=\sum_{j=1}^m p_j\left(\cos^2\frac{\theta_j}{2}+\sin^2\frac{\theta_j}{2}\left(\vec{n}_j\cdot\vec{b}\right)^2\right) =\vec{b}^TS \, \vec{b},
    \end{equation}
    where $S$ is a real symmetric matrix of order $3$ defined as
    \begin{equation}
        S_{kl}=\sum_{j=1}^m p_j\left(\cos^2\frac{\theta_j}{2}\delta_{kl}+\sin^2\frac{\theta_j}{2}\left(\vec{n}_j\right)_k\left(\vec{n}_j\right)_l\right).
    \end{equation}
    
    Since $F\geq 0$, then $S$ has three real non-negative eigenvalues.
    Using Eq.~(\ref{eq:qubit_mixed_unitary_fidelity}), we conclude that the extremal values of the operation fidelity for~this~channel are equal to the extremal eigenvalues of~matrix $S$, or, equivalently, to the Crawford number and the numerical radius,
        \begin{equation}
        \begin{split}
            F_{\min}&=\min_{\ket{\psi}}\braket{\psi|\Phi\left(\ket{\psi}\bra{\psi}\right)|\psi}=c(S),
            \\
            F_{\max}&=\max_{\ket{\psi}}\braket{\psi|\Phi\left(\ket{\psi}\bra{\psi}\right)|\psi}=w(S).
        \end{split}
    \end{equation}

    The average operation fidelity can be obtained integrating Eq.~(\ref{eq:qubit_mixed_unitary_fidelity}) over the Bloch sphere 
    \begin{equation}
        \left\langle F\right\rangle=\sum_{k=1}^m\frac{p_k\left(2\cos^2\alpha_k+1\right)}{3}.
    \end{equation}
    This result agrees with~the~known formula connecting the average fidelity of~a~quantum channel with traces of~its Kraus operators~\cite{Nielsen_2002}.

    To derive the distribution of the operation fidelity, we use a basis in which matrix $S$ is diagonal, $S =\mathrm{diag}\left(\lambda_1,\lambda_2,\lambda_3\right)$, where $0\leq\lambda_1\leq\lambda_2\leq\lambda_3$. Using the~standard parametrization of the Bloch sphere, $\ket{\psi}=\cos\frac{\theta}{2}\ket{0}+\mathrm{e}^{\mathrm{i}\varphi}\sin\frac{\theta}{2}\ket{1}$, we integrate over $\theta$ to obtain an~analytical expression for the operation fidelity distribution:
    \begin{widetext}
    \begin{equation}\label{eq:mixed_unitary_dist}
        P\left(F\right)=\frac{1}{4\pi}\int_0^{2\pi}\frac{\mathrm{d}\varphi}{\sqrt{\left(F-\lambda_1\cos^2\varphi-\lambda_2\sin^2\varphi\right)\left(\lambda_3-\lambda_1\cos^2\varphi - \lambda_2\sin^2\varphi\right)}}.
    \end{equation}    
    \end{widetext}
    This formula is valid if~$\lambda_2<F\leq\lambda_3$. A suitable formula for the case $\lambda_1\leq F<\lambda_2$ can be obtained by permuting $\lambda_1$ with $\lambda_3$.
    For~$F\rightarrow\lambda_2$, the density $P\left(F\right)$ becomes infinite.

	\subsubsection{Pauli channels acting on~qubits}
		Pauli channels form a subset of mixed unitary qubit channels with Kraus operators proportional to Pauli matrices $\sigma_j$ and the identity matrix. They can be written as
		\begin{equation}
		    \Phi\left(\rho\right)=p_0 \rho+\sum_{j=1}^3 p_j \sigma_j \rho \sigma_j,
		\end{equation}
		where $p_0+p_1+p_2+p_3=1$. Using Eq.~(\ref{eq:mixed_unitary_dist}), we can express the operation fidelity distribution of the channel as
		\begin{widetext}
		\begin{equation}\label{eq:pauli_channels}
			P\left(F\right)=\frac{1}{4\pi}\int_0^{2\pi}\frac{\mathrm{d}\varphi}{\sqrt{\left(F-p_0-p_1\cos^2\varphi-p_2\sin^2\varphi\right)\left(p_3-p_1\cos^2\varphi-p_2\sin^2\varphi\right)}}.
		\end{equation}
		\end{widetext}
		If~$p_1=p_2$, this formula simplifies to
		\begin{equation}
			P\left(F\right)=\frac{1}{2\sqrt{\left(F-p_0-p_1\right)\left(p_3-p_1\right)}}.
		\end{equation}
		Furthermore, if $p_1=0$, this formula is reduced to the distribution of the operation fidelity obtained for case a) from Table~\ref{tab:special_qubit_channels} -- a unitary channel defined by the operator $U=\ket{0}\!\bra{0}+\mathrm{e}^{\mathrm{i}\alpha}\ket{1}\!\bra{1}$ with $p_0=\cos^2\frac{\alpha}{2}$ and $p_3=\sin^2\frac{\alpha}{2}$.
	\subsection{Unitary channels acting on qutrits}
	    In this subsection, we shall focus on the study of a single unitary operator acting on qutrits. 
	    Any unitary operator is diagonalizable; therefore, we will solely concentrate on a study of statistical properties of fidelities of diagonal channels since the probability distributions depend only on the eigenvalues of a given unitary operator.
	    
		Here, we consider a unitary operator $U$ of order $3$.
		Since the overall phase of $U$ does not influence its operation fidelity and cannot be physically measured, we can assume that one of its eigenvalues is $1$. 
		Upon diagonalization, this operator can then be expressed as $U = \left|0\right\rangle\!\left\langle 0\right|+\mathrm{e}^{\mathrm{i}\alpha}\left|1\right\rangle\!\left\langle 1\right|+\mathrm{e}^{\mathrm{i}\beta}\left|2\right\rangle\!\left\langle 2\right|$, where $\alpha,\beta\in\left[0,2\pi\right]$ and $\alpha\leq\beta$. 
		
    Thus, for any unitary operator $U$ of size $d=3$ the numerical range $W$ forms a triangle on the complex plane with vertices: $1$, $\mathrm{e}^{\mathrm{i}\alpha}$, and $\mathrm{e}^{\mathrm{i}\beta}$, while its numerical shadow $P_U\left(z\right)$ is uniform on the said triangle~\cite{Kippenhahn_1951}. 
		Hence, to obtain the probability distribution of operation fidelity, it suffices to integrate the uniform shadow $P_U\left(z\right)$ over the phase $\varphi$, where $z=\sqrt{F}\mathrm{e}^{\mathrm{i}\varphi}$, as given by Eq.~(\ref{eq:fid_dist_from_nom_shad}).
		Due to the uniformity of the numerical shadow this integration can be performed analytically, resulting in the following, compact expression for the density of the operation fidelity, where $\Re$ denotes the real part of a potentially complex value of the $\arccos$ function:
		\begin{widetext}
		\begin{equation}
			P\left(F\right)=\frac{2\Re\left(\arccos\left(\frac{\cos\left(\frac{\beta}{2}\right)}{\sqrt{F}}\right)-\arccos\left(\frac{\cos\left(\frac{\alpha}{2}\right)}{\sqrt{F}}\right)-\arccos\left(\frac{\cos\left(\frac{\beta-\alpha}{2}\right)}{\sqrt{F}}\right)\right)}{\sin\left(\alpha\right)+\sin\left(\beta-\alpha\right)-\sin\left(\beta\right)}.
			\label{unitary_qutrits}
		\end{equation}		
		\end{widetext}
		
		In its full generality (for $d=3$), the distribution of the operation fidelity $P\left(F\right)$ has three cusps -- see Fig.~\ref{fig:triangle_radii}. 
		Their nature can be explained using the numerical range of the unitary matrix $U$ -- each of the cusps represents a change of the integration domain (see Fig.~\ref{fig:triangle_radii}) and corresponds to a value of the operation 
  fidelity $F_i$ for which a circle on complex plane with center $0$ and radius $r_i=\sqrt{F_i}$ is tangent to at least one side of the numerical range (which forms a triangle). 
		These critical values read $F_1=\cos^2\left(\frac{\alpha}{2}\right)$, $F_2=\cos^2\left(\frac{\beta}{2}\right)$ and $F_3=\cos^2\left(\frac{\beta-\alpha}{2}\right)$.
  
    \onecolumngrid
    
    \begin{figure}[H]
	\centering
	\begin{tikzpicture}
	\node (a) at (0,0) {\includegraphics[width=0.7\textwidth]{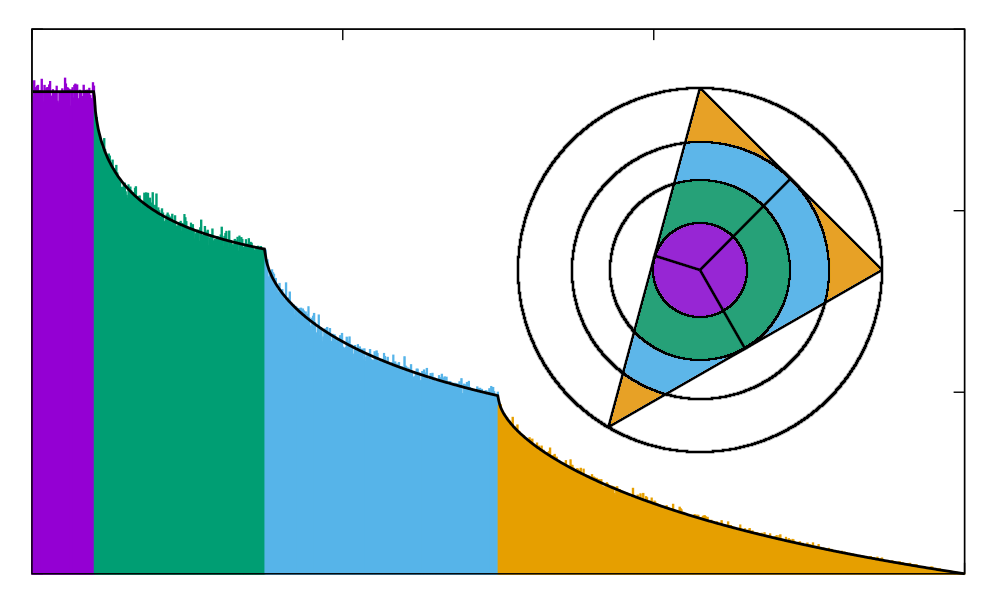}};
	\node (a) at (0,-4.3) {$F$};
    \node (a) at (-5.9,-3.8) {$0$};
	\node (a) at (-1.97,-3.8) {$0{.}33$};
	\node (a) at (1.97,-3.8) {$0{.}67$};
	\node (a) at (5.9,-3.8) {$1$};
	\node (a) at (-6.3,0) {\begin{turn}{90}$P(F)$\end{turn}};
	\node (a) at (-6.2,-3.4) {$0$};
	\node (a) at (-6.2,-1.13) {$1$};
	\node (a) at (-6.2,1.13) {$2$};
    \node (a) at (-6.2,3.4) {$3$};
    \node (a) at (5,0.4) {$1$};
    \node (a) at (2.7,2.9) {$\mathrm{e}^{\mathrm{i}\alpha}$};
    \node (a) at (1.3,-1.9) {$\mathrm{e}^{\mathrm{i}\beta}$};
    \node (a) at (2.3,0.6) {$r_1$};
    \node (a) at (3.1,-0.3) {$r_2$};
    \node (a) at (3.6,1.2) {$r_3$};
    \node (a) at (-5.1,-3.8) {$F_1$};
    \node (a) at (-2.9,-3.8) {$F_2$};
    \node (a) at (0,-3.8) {$F_3$};
    \end{tikzpicture}
    \caption{
    Distribution $P\left(F\right)$ of operation fidelity for~a~qutrit channel defined by a unitary operator $U = \text{diag}(1,\mathrm{e}^{\mathrm{i}\alpha},\mathrm{e}^{\mathrm{i}\beta})$ with $\alpha=\frac{\pi}{2}$ and $\beta=\frac{4\pi}{3}$: numerical data compared with analytical curve obtained by Eq.~(\ref{unitary_qutrits}).
    Distribution $P(F)$ can be obtained by integrating over the numerical range: $W(U) = \triangle (1,e^{i\alpha}, e^{i\beta})$.
	This triangle can be split into fragments of rings with radii $r_1$, $r_2$ and $r_3$ --- see the inset.
    The cusps of the graph occur for values of the operation fidelity equal to squares of the radii $r_1$, $r_2$ and $r_3$ for which integration domain changes: $F_1=r_1^2=\cos^2\left(\frac{\beta-\alpha}{2}\right)=\left(2-\sqrt{3}\right)/4$, $F_2=r_2^2=\cos^2\left(\frac{\beta}{2}\right)=1/4$ and $F_3=r_3^2=\cos^2\left(\frac{\alpha}{2}\right)=1/2$.}\label{fig:triangle_radii}
	\end{figure}
 
    \twocolumngrid
	
	Examples of~distributions obtained using Eq.~(\ref{unitary_qutrits}) for~different channels are shown in Fig.~\ref{fig:examples_qutrit_PDF}.
	
    \begin{figure*}
        \begin{tikzpicture}[scale=0.9, every node/.style={scale=0.9}]
    \node (a) at (0,0) {\includegraphics[width=0.48\textwidth]{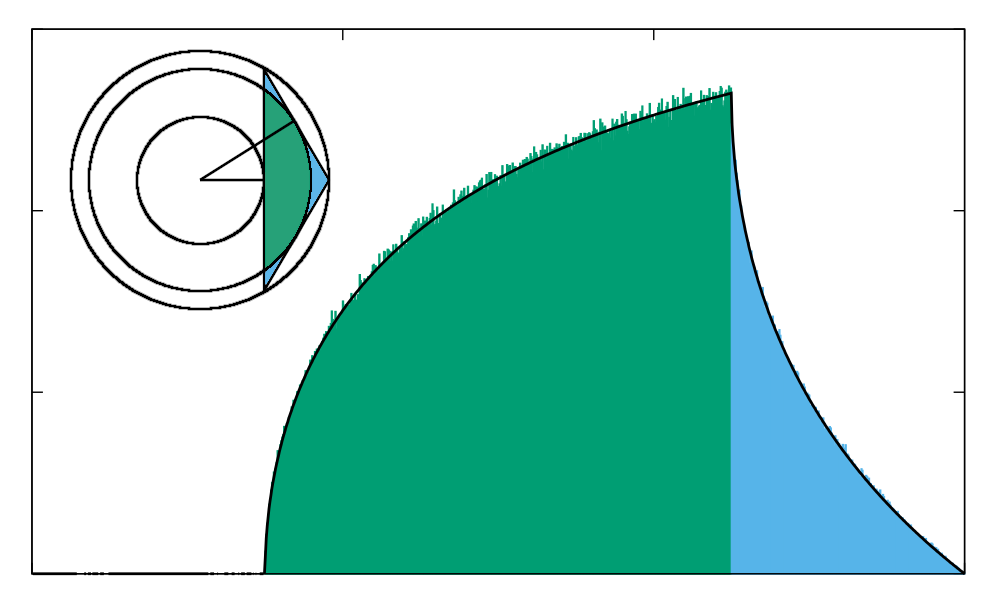}};
    \node (a) at (0,+2.8) {(a) $\alpha=\frac{\pi}{3}$, $\beta=\frac{5\pi}{3}$};
    \node (a) at (0,-3.1) {$F$};
    \node (a) at (-4,-2.7) {$0$};
    \node (a) at (-1.33,-2.7) {$0{.}33$};
    \node (a) at (1.33,-2.7) {$0{.}67$};
    \node (a) at (4,-2.7) {$1$};
    \node (a) at (-4.5,0) {\begin{turn}{90}$P(F)$\end{turn}};
    \node (a) at (-4.3,-2.3) {$0$};
    \node (a) at (-4.5,-0.77) {$0{.}83$};
    \node (a) at (-4.5,0.77) {$1{.}67$};
    \node (a) at (-4.4,2.3) {$2{.}5$};
    \node (a) at (-1.3,1) {$1$};
    \node (a) at (-1.7,2.1) {$\mathrm{e}^{\mathrm{i}\alpha}$};
    \node (a) at (-1.9,-0.1) {$\mathrm{e}^{\mathrm{i}\beta}$};
    \node (a) at (-2.3,0.9) {$r_1$};
    \node (a) at (-1.8,1.2) {$r_2$};
    \node (a) at (-2,-2.7) {$F_1$};
    \node (a) at (2,-2.7) {$F_2$};
\end{tikzpicture}
        \hspace*{\fill}
        \begin{tikzpicture}[scale=0.9, every node/.style={scale=0.9}]
    \node (a) at (0,0) {\includegraphics[width=0.48\textwidth]{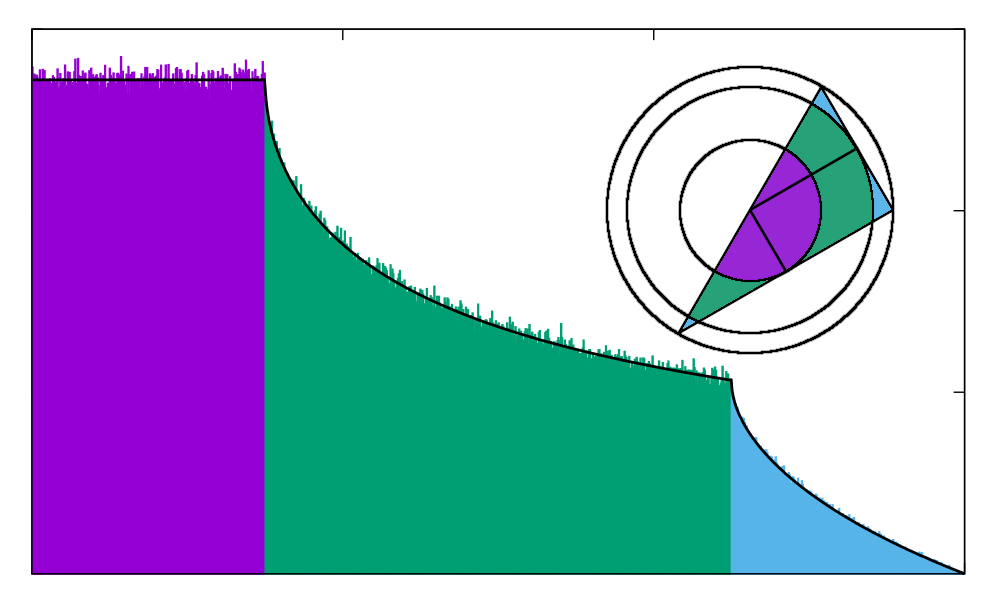}};
    \node (a) at (0,+2.8) {(b) $\alpha=\frac{\pi}{3}$, $\beta=\frac{4\pi}{3}$};
    \node (a) at (0,-3.1) {$F$};
    \node (a) at (-4,-2.7) {$0$};
    \node (a) at (-1.33,-2.7) {$0{.}33$};
    \node (a) at (1.33,-2.7) {$0{.}67$};
    \node (a) at (4,-2.7) {$1$};
    \node (a) at (-4.5,0) {\begin{turn}{90}$P(F)$\end{turn}};
    \node (a) at (-4.3,-2.3) {$0$};
    \node (a) at (-4.5,-0.77) {$0{.}67$};
    \node (a) at (-4.5,0.77) {$1{.}33$};
    \node (a) at (-4.3,2.3) {$2$};
    \node (a) at (3.6,0.8) {$1$};
    \node (a) at (3,2) {$\mathrm{e}^{\mathrm{i}\alpha}$};
    \node (a) at (1.3,-0.3) {$\mathrm{e}^{\mathrm{i}\beta}$};
    \node (a) at (2.5,0.5) {$r_1$};
    \node (a) at (2.9,1) {$r_2$};
    \node (a) at (-2,-2.7) {$F_1$};
    \node (a) at (2,-2.7) {$F_2$};
\end{tikzpicture}
        
        \begin{tikzpicture}[scale=0.9, every node/.style={scale=0.9}]
    \node (a) at (0,0) {\includegraphics[width=0.48\textwidth]{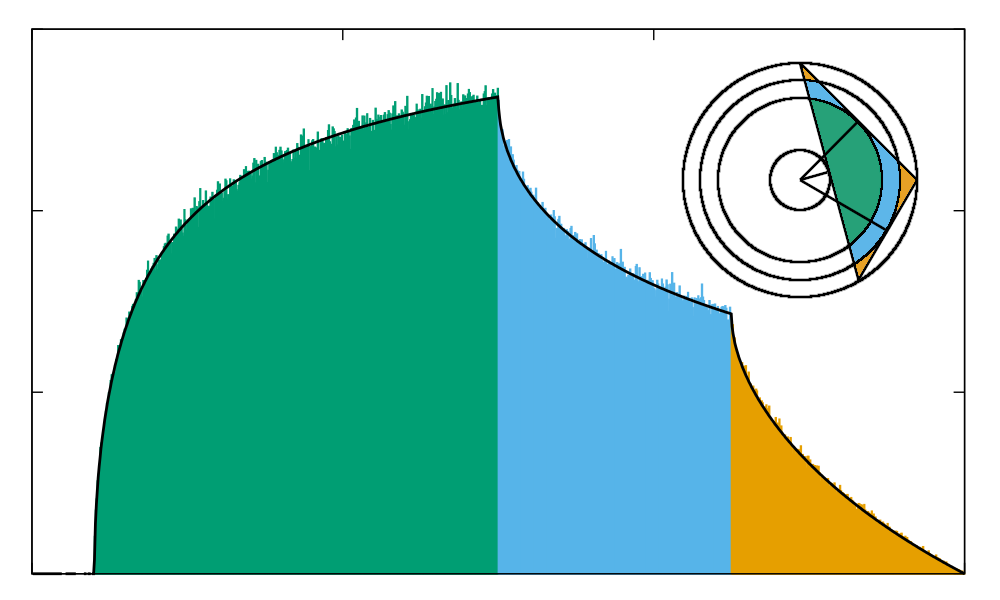}};
    \node (a) at (0,+2.8) {(c) $\alpha=\frac{\pi}{2}$, $\beta=\frac{5\pi}{3}$}; 
    \node (a) at (0,-3.1) {$F$};
    \node (a) at (-4,-2.7) {$0$};
    \node (a) at (-1.33,-2.7) {$0{.}33$};
    \node (a) at (1.33,-2.7) {$0{.}67$};
    \node (a) at (4,-2.7) {$1$};
    \node (a) at (-4.5,0) {\begin{turn}{90}$P(F)$\end{turn}};
    \node (a) at (-4.3,-2.3) {$0$};
    \node (a) at (-4.5,-0.77) {$0{.}67$};
    \node (a) at (-4.5,0.77) {$1{.}33$};
    \node (a) at (-4.3,2.3) {$2$};
    \node (a) at (3.8,1) {$1$};
    \node (a) at (2.2,2.1) {$\mathrm{e}^{\mathrm{i}\alpha}$};
    \node (a) at (3.1,-0.1) {$\mathrm{e}^{\mathrm{i}\beta}$};
    \node (a) at (3,1.1) {$r_1$};
    \node (a) at (3.1,1.2) {$r_2$};
    \node (a) at (3.1,0.9) {$r_3$};
    \node (a) at (-3.4,-2.7) {$F_1$};
    \node (a) at (0,-2.7) {$F_2$};
    \node (a) at (2,-2.7) {$F_3$};
\end{tikzpicture}
        \hspace*{\fill}
        \begin{tikzpicture}[scale=0.9, every node/.style={scale=0.9}]
    \node (a) at (0,0) {\includegraphics[width=0.48\textwidth]{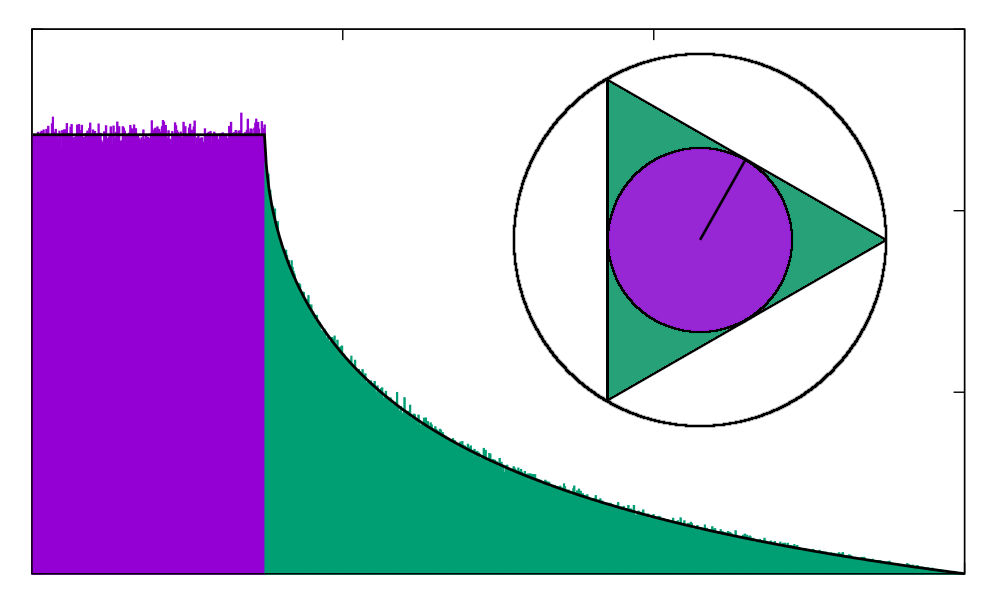}};
    \node (a) at (0,+2.8) {(d) $\alpha=\frac{2\pi}{3}$, $\beta=\frac{4\pi}{3}$};
    \node (a) at (0,-3.1) {$F$};
    \node (a) at (-4,-2.7) {$0$};
    \node (a) at (-1.33,-2.7) {$0{.}33$};
    \node (a) at (1.33,-2.7) {$0{.}67$};
    \node (a) at (4,-2.7) {$1$};
    \node (a) at (-4.5,0) {\begin{turn}{90}$P(F)$\end{turn}};
    \node (a) at (-4.3,-2.3) {$0$};
    \node (a) at (-4.3,-0.77) {$1$};
    \node (a) at (-4.3,0.77) {$2$};
    \node (a) at (-4.3,2.3) {$3$};
    \node (a) at (3.5,0.5) {$1$};
    \node (a) at (0.8,2) {$\mathrm{e}^{\mathrm{i}\alpha}$};
    \node (a) at (0.7,-1) {$\mathrm{e}^{\mathrm{i}\beta}$};
    \node (a) at (2.1,0.7) {$r_1$};
    \node (a) at (-2,-2.7) {$F_1$};
\end{tikzpicture}
        \caption{Distributions of the operation fidelity $P(F)$ for exemplary unitary qutrit channels defined by operators of the form $\mathrm{diag}\left(1,\mathrm{e}^{\mathrm{i}\alpha},\mathrm{e}^{\mathrm{i}\beta}\right)$ for different values of $\alpha$ and $\beta$.
        Solid curves denote distributions given by Eq.~(\ref{unitary_qutrits}), while underlying histograms refer to numerical data that fits perfectly analytical results.
        The existence of cusps is explained by the change of the integration domain as in Fig.~\ref{fig:triangle_radii} -- to highlight them we used different colors of histograms.
        The radii of circles tangent to sides of the triangle forming the numerical range $W$ (denoted by $r_i$) are connected with the cusps of the distribution $P\left(F\right)$ by $F_i = r_i^2$.}
        \label{fig:examples_qutrit_PDF}
    \end{figure*}           
        
\section{Minimal operation fidelity for Schur channels}\label{sec:diagonal_Kraus}
    This section is devoted to the study of a channel $\Phi$, acting on a system of an arbitrary dimension $d$, described by $m$ diagonal Kraus operators.
    Such channels are sometimes called {\sl Schur channels}.
    This is because every Schur-product channel that acts as $\Phi(\rho) = M \circ \rho$, where $\circ$ denotes the Schur (Hadamard) product, can be decomposed into diagonal Kraus operators~\cite{Bengtsson_Zyczkowski_geometry,Girard_2022}.

    Aside from~being mathematically convenient to~analyze, Schur channels also have a~certain physical significance that justifies further investigation of~their properties. Namely, such channels can be used to~describe a~phase flip noise -- a~noise which only changes the~relative phases of~probability amplitudes (in~standard computational basis). They can also be used to~model phase-damping (or~dephasing) channels \cite{Arqand_2020,Rexiti_2022}, which describe the~wash-out of~coherence and~play an~important role in~modern quantum technologies.
    
	Let~$\Phi$ be a~quantum Schur channel acting on~$d$-dimensional quantum states as
		\begin{equation}
			\Phi\left(\rho\right)=\sum_{j=1}^m K_j\rho K^\dagger_j,
		\end{equation}
		where $K_j$ denote $m$ Kraus operators that are all diagonal in~a~certain basis:
		\begin{equation}
K_j\left|i\right\rangle = \lambda_{ji}\left|i\right\rangle,
		\end{equation}
    with $i=1,\dots, m$ and $j=0,\dots, d-1.$
    Since $\Phi$ preserves the~trace of~density matrices, these eigenvalues satisfy
		\begin{equation}
     \sum_{j=1}^m\left|\lambda_{ji}\right|^2=1.
		\end{equation}

		Let $\left|\psi\right\rangle$ be a~pure state expressed in the computational basis $\left|i\right\rangle$ as
		\begin{equation}\label{eq:def_psi}
			\left|\psi\right\rangle=\sum_{i=0}^{d-1}\psi_i\left|i\right\rangle,
		\end{equation}
where $\psi_i$ are complex numbers satisfying $\sum_{i=0}^{d-1}\left|\psi_i\right|^2=1$.
The operation fidelity of~channel $\Phi$ acting on~state $\left|\psi\right\rangle$ is~then given by
    \begin{widetext}
    \begin{equation}
    \begin{split}
		F\left(\Phi,\left|\psi\right\rangle\right) &= \sum_{j=1}^m\left|\left\langle\psi\right|K_j\left|\psi\right\rangle\right|^2 = \sum_{j=1}^m\left|\sum_{i=0}^{d-1}\lambda_{ji}\left|\psi_i\right|^2\right|^2 =\sum_{j=1}^m\left(\sum_{i=0}^{d-1}\lambda^\star_{ji}\left|\psi_i\right|^2\right)\left(\sum_{l=0}^{d-1}\lambda_{jl}\left|\psi_l\right|^2\right)  = \\ &= \sum_{i,l=0}^{d-1}\left(\sum_{j=1}^m\lambda^\star_{ji}\lambda_{jl}\right)\left|\psi_i\right|^2\left|\psi_l\right|^2 = \sum_{i,l=0}^{d-1}\Re\left(\sum_{j=1}^m\lambda^\star_{ji}\lambda_{jl}\right)\left|\psi_i\right|^2\left|\psi_l\right|^2=\sum_{i,l=0}^{d-1}G_{il}p_i p_l,
		\label{fid_quad}
    \end{split}
    \end{equation}
where $G_{il}=\Re\left(\sum_{j=1}^m\lambda^\star_{ji}\lambda_{jl}\right)$ and~$p_i=\left|\psi_i\right|^2$. 
The matrix $G_{il}$ can be expressed as
    \begin{equation}
        G_{il}=\Re\left(\sum_{j=1}^m\lambda^\star_{ji}\lambda_{jl}\right)=\sum_{j=1}^m\left(\Re\left(\lambda^\star_{ji}\right)\Re\left(\lambda_{jl}\right)-\Im\left(\lambda^\star_{ji}\right)\Im\left(\lambda_{jl}\right)\right)=\sum_{j=1}^m\left(\Re\left(\lambda_{ji}\right)\Re\left(\lambda_{jl}\right)+\Im\left(\lambda_{ji}\right)\Im\left(\lambda_{jl}\right)\right),
    \end{equation}
    \end{widetext}
therefore it is the Gram matrix of a set of real vectors $\vec{v}_i=\left(\Re\left(\lambda_{1i}\right),\Re\left(\lambda_{2i}\right),\ldots,\Re\left(\lambda_{mi}\right),\Im\left(\lambda_{1i}\right),\Im\left(\lambda_{2i}\right),\ldots,\Im\left(\lambda_{mi}\right)\right)$ for $i\in\left\{0,1,\ldots,d-1\right\}$. If these vectors are linearly independent (which is only possible if $d\leq 2m$), this matrix is invertible.

	In~order to~find extrema of the operation fidelity we need to~extremize the~quadratic form given by Eq.~(\ref{fid_quad}) over the simplex defined by~
		\begin{align}
			\sum_{i=0}^{d-1}p_i&=1,\label{simp} 
		\end{align}
  with $p_i\ge0$ for $i=0,1, \ldots,d-1$.
Making use of the~Lagrange multipliers with constant $\mu$, we obtain
		\begin{equation}
   2\sum_{j=0}^{d-1}G_{ij}p_j-\mu=0,
			\label{cond}
		\end{equation}
for all $i\in \{0,...,d-1\}$. Multiplying this equation by~$p_i$ and~summing over~$i$ leads then to
		\begin{equation}
			\mu=2\sum_{i,j=0}^{d-1}G_{ij}p_i p_j=2F\left(\Phi,\left|\psi\right\rangle\right).
		\end{equation}
Simultaneously, Eq.~(\ref{cond}) leads to~the~following solution for~$p_i$:
		\begin{equation}
			p_i=\frac{\mu}{2}\sum_{j=0}^{d-1}\left(G^{-1}\right)_{ij}.
		\end{equation}
The normalization condition (\ref{simp}) gives rise to
		\begin{equation}
			\mu=\frac{2}{\sum_{i,j=0}^{d-1}\left(G^{-1}\right)_{ij}}.
		\end{equation}

Finally, the operation fidelity extremized over~the~interior of the simplex 
(\ref{simp}) reads
	\begin{equation}\label{eq:extremal_fidelity}
		F\left(\Phi,\left|\psi\right\rangle\right)_{\mathrm{extr}} = F\left(\Phi,\left|\psi\right\rangle\right)_{\mathrm{min}} = \frac{1}{\sum_{i,j=0}^{d-1}\left(G^{-1}\right)_{ij}}
	\end{equation}
and~is achieved for any state $\ket{\psi}$ whose coefficients (\ref{eq:def_psi}) satisfy
		\begin{equation}\label{eq:psi_non-negative}
			\left|\psi_i\right|^2=\frac{\sum_{j=0}^{d-1}\left(G^{-1}\right)_{ij}}{\sum_{k,l=0}^{d-1}\left(G^{-1}\right)_{kl}},
		\end{equation}
    provided that the right-hand side of Eq.~(\ref{eq:psi_non-negative}) is non-negative for all $i$. 
    Observe that Eq.~(\ref{eq:extremal_fidelity}) is valid if the global minimum of the quadratic form~(\ref{fid_quad}) lies inside the simplex~(\ref{simp}). 

        \begin{center}
	\begin{figure}[H]
	\centering
    \begin{tikzpicture}[scale=0.65, every node/.style={scale=0.65}]
	\node (a) at (0,0) {\includegraphics[width=0.7\textwidth]{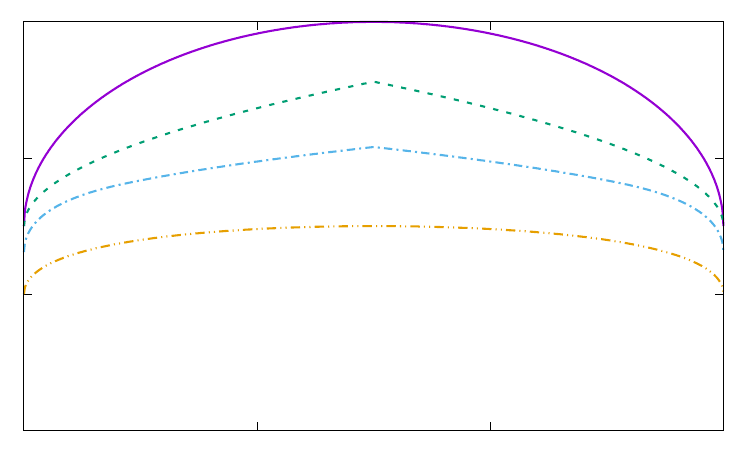}};
	\node (a) at (0,-4.2) {\scalebox{1.54}{$p$}};
    \node (a) at (-5.9,-3.8) {\scalebox{1.54}{$0$}};
	\node (a) at (-1.97,-3.8) {\scalebox{1.54}{$0{.}33$}};
	\node (a) at (1.97,-3.8) {\scalebox{1.54}{$0{.}67$}};
	\node (a) at (5.9,-3.8) {\scalebox{1.54}{$1$}};
	\node (a) at (-6.6,0) {\scalebox{1.54}{\begin{turn}{90}$F_{\text{min}}$\end{turn}}};
	\node (a) at (-6.2,-3.4) {\scalebox{1.54}{$0$}};
	\node (a) at (-6.5,-1.13) {\scalebox{1.54}{$0{.}33$}};
	\node (a) at (-6.5,1.13) {\scalebox{1.54}{$0{.}67$}};
    \node (a) at (-6.2,3.4) {\scalebox{1.54}{$1$}};
\end{tikzpicture}
    \caption{Minimal operation fidelity for~a~family of qubit diagonal channels $\Phi_p$ defined by two Kraus operators \mbox{$K_1 = \sqrt{p}\ket{0}\!\bra{0} + \sqrt{1-p}\ket{1}\!\bra{1}$} and $K_2 = \sqrt{1-p}\ket{0}\!\bra{0} + \sqrt{p}\ket{1}\!\bra{1}$, as well as qutrit diagonal channels $\Phi'_{p,q}$ defined by three Kraus operators \mbox{$K'_1 = \sqrt{p}\ket{0}\!\bra{0} + \sqrt{1-p}\ket{1}\!\bra{1}+\sqrt{q}\ket{2}\!\bra{2}$}, $K'_2 = \sqrt{1-p}\ket{0}\!\bra{0} + \sqrt{p}\ket{1}\!\bra{1}$ and $K'_3=\sqrt{1-q}\ket{2}\!\bra{2}$, with parameters $p$ and $q$ varying from 0 to 1. Solid purple line denotes the result for $\Phi_p$, while dashed green line, dash-dotted blue line and dash-double-dotted yellow line denote results for $\Phi'_{p,q}$, respectively for $q=1$, $q=0{.}3$ and $q=0$.
    Note that $\Phi_{1/2}$ is an identity channel; therefore, its operation fidelity $F=1$.}\label{fig:min_fid_diagonal_channels}
	\end{figure}
    \end{center}

    If this condition is not satisfied, the minimal operation fidelity corresponds to the boundary of simplex.
    It can be found as $F\left(\Phi,\left|\psi\right\rangle\right)_{\mathrm{min}} = \sum_{ij}G_{ij} p_i p_j$ for a vector $\vec{p} = \{p_i\}_{i=0}^{d-1}$ belonging to the simplex, such that the length of $\vec{p}$ in the metric $G_{ij}$ is minimal. 
    Similarly, the maximum operation fidelity always corresponds to a vector $\vec{p}$ on the boundary of the simplex of the maximal length in the metric $G_{ij}$.
    This is because the form~(\ref{fid_quad}) does not admit a global maximum.
    Nonetheless, these two extremal values do not admit such simplification as in Eq.~(\ref{eq:extremal_fidelity}). 
    In Fig.~\ref{fig:min_fid_diagonal_channels} we show the dependence of the minimum of the operation fidelity for a one-parameter family of diagonal qubit channels, for which the global minimum of the quadratic form~(\ref{fid_quad}) lies inside the simplex.
    
	

\section{Fidelity distribution for Schur and unitary channels}\label{sec:proposed_method}
    Consider Schur channels of an arbitrary dimension $d$. Our goal here is to fully characterize the distribution $P\left(F\right)$ of their operation fidelity.
    Such channels can be described by Kraus operators which are Hermitian and mutually commuting.  
    Provided at least $(d-1)$ Kraus operators
    are orthogonal, we are able to find the full distribution of the operation fidelity.     
    Later on, in Subsection~\ref{subsec:arbitrary_unitary_channel}, these results are used to develop a method for determining the distribution of the operation fidelity $P\left(F\right)$ for a unitary channel of an arbitrary size.
    
    \subsection{Schur channels}\label{subsec:Schur_channels}    
    Let us show that the joint numerical range of a sufficiently high number of orthogonal, commuting Hermitian operators forms a simplex.
    
    \begin{lemma}\label{lemma:linearly_independent_simplex}
    For a set of $(d-1)$ commuting Hermitian operators $\vec{H} = \{H_i\}_{i=1}^{d-1}$ of size $d$, the following statements are equivalent:
    \begin{enumerate}
        \item[(1)] $\{H_i\}\cup\mathbb{I}_d$ are linearly independent, i.e.\ their diagonalized forms span the entire space of diagonal matrices,
        \item[(2)] their joint numerical range $W(\vec{H})$ forms a non-degenerate simplex of dimension $\left(d-1\right)$.
    \end{enumerate}
    \end{lemma}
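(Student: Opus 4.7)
The plan is to exploit the fact that commuting Hermitian operators are simultaneously diagonalizable, which reduces the joint numerical range to the image of a probability simplex under a linear map. Concretely, I would fix a common eigenbasis $\{|k\rangle\}_{k=1}^d$ in which each $H_i=\mathrm{diag}(\lambda_{i1},\ldots,\lambda_{id})$. For any pure state $|\psi\rangle=\sum_k \psi_k|k\rangle$, one has $\langle\psi|H_i|\psi\rangle=\sum_k \lambda_{ik}|\psi_k|^2$, so the joint expectation values depend only on the probability vector $\vec{p}=(|\psi_1|^2,\ldots,|\psi_d|^2)\in\Delta^{d-1}$. Consequently
\[
W(\vec{H})=\{\Lambda\vec{p}:\vec{p}\in\Delta^{d-1}\},\qquad \Lambda_{ik}=\lambda_{ik},
\]
which is the convex hull of the $d$ vertex images $\vec{v}_k=(\lambda_{1k},\ldots,\lambda_{(d-1)k})\in\mathbb{R}^{d-1}$, $k=1,\ldots,d$.

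Next I would reduce the problem to an affine independence question. The convex hull of $d$ points in $\mathbb{R}^{d-1}$ is a non-degenerate $(d-1)$-simplex exactly when those points are affinely independent. A standard criterion says this is equivalent to the non-vanishing of the determinant of the $d\times d$ matrix
\[
M=\begin{pmatrix} 1 & 1 & \cdots & 1 \\ \lambda_{11} & \lambda_{12} & \cdots & \lambda_{1d} \\ \vdots & \vdots & & \vdots \\ \lambda_{(d-1)1} & \lambda_{(d-1)2} & \cdots & \lambda_{(d-1)d}\end{pmatrix},
\]
whose rows are precisely the diagonals of $\mathbb{I}_d,H_1,\ldots,H_{d-1}$ in the common eigenbasis. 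Non-vanishing of $\det M$ is equivalent to linear independence of these rows, which, since linear independence of diagonal operators is the same as linear independence of their diagonals, is exactly statement (1). Chaining these equivalences proves (1)$\Leftrightarrow$(2).

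The only mildly delicate point is ensuring the reduction to the simplex image is tight: one must argue that restricting to a common eigenbasis does not shrink $W(\vec{H})$, i.e.\ arbitrary pure states cannot produce expectation tuples outside the convex hull of the $\vec{v}_k$. This is immediate from the above computation, since $\langle\psi|H_i|\psi\rangle=\sum_k \lambda_{ik}|\psi_k|^2$ and $(|\psi_k|^2)$ is a probability vector, so every value in $W(\vec{H})$ is a convex combination of the $\vec{v}_k$ and vice versa (any probability vector is realized by some $|\psi\rangle$). The remaining work is essentially the equivalence between affine independence of $d$ points in $\mathbb{R}^{d-1}$ and linear independence of their homogenized coordinates, which I would record as a one-line remark rather than a calculation. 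I do not anticipate a serious obstacle beyond being careful about non-degeneracy (the forward direction gives a full-dimensional simplex precisely because all vertices are affinely independent, while the reverse direction needs the observation that any affine dependence among the $\vec{v}_k$ translates into a linear dependence among $\mathbb{I}_d,H_1,\ldots,H_{d-1}$).
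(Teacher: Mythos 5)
Your proposal is correct and follows essentially the same route as the paper's own proof: work in the common eigenbasis, identify $W(\vec{H})$ with the convex hull of the $d$ eigenvalue tuples, and reduce non-degeneracy of the simplex to the non-vanishing of the determinant of the $d\times d$ matrix whose rows are the diagonals of $H_1,\ldots,H_{d-1},\mathbb{I}_d$ (the paper phrases this via the simplex volume formula, which is the same determinant). Your explicit justification that the joint numerical range equals that convex hull is a detail the paper leaves implicit, but it does not change the argument.
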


    The proof is provided in Appendix~\ref{app:proofs}.
    Note that in the above lemma, the condition that $\mathrm{span}\{H_i\}$ does not contain $\mathbb{I}_d$ is necessary, since otherwise one of the dimensions of the simplex would be degenerate. 
    As an example, consider $d=2$ with a single Hermitian matrix $H_1 = \mathrm{diag}(1,1)$. 
    Then, the numerical range is a point -- a degenerated 1-simplex (interval). 
    
    Consequently, we are able to prove statements concerning the uniformity of the joint numerical shadow for a set of commuting Hermitian operators.
    
    \begin{theorem}\label{thm:hermitian_operators}
         Let $\vec{H} = \{H_i\}_{i=1}^{d-1}$ be a set of $(d-1)$ commuting Hermitian operators of size $d$ such that the joint numerical range forms a $(d-1)$-dimensional simplex $\Delta$ (or, equivalently, such that $\{H_i\}\cup\mathbb{I}_d$ are linearly independent).
        Then, the joint numerical shadow $P_{\vec{H}}(\vec{r})$ is uniform in $\Delta$.
    \end{theorem}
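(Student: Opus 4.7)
The plan is to reduce the statement to a classical fact about the push-forward of the Fubini--Study measure by exploiting simultaneous diagonalization. Since the $H_i$ commute, I choose an orthonormal basis $\{|k\rangle\}_{k=1}^d$ in which $H_i|k\rangle = \lambda_{i,k}|k\rangle$ for all $i$. Expanding $|\psi\rangle = \sum_k c_k |k\rangle$, a direct calculation gives
\begin{equation*}
\langle\psi|\vec{H}|\psi\rangle \;=\; \sum_{k=1}^d |c_k|^2 \, \vec{v}_k, \qquad \vec{v}_k := (\lambda_{1,k},\ldots,\lambda_{d-1,k}) \in \mathbb{R}^{d-1},
\end{equation*}
so the shadow map $|\psi\rangle \mapsto \langle\psi|\vec{H}|\psi\rangle$ factors as $|\psi\rangle \mapsto \vec{p} := (|c_1|^2,\ldots,|c_d|^2) \mapsto T(\vec{p}) := \sum_k p_k \vec{v}_k$, where the first arrow lands in the standard probability simplex $\Delta_{d-1}$ and $T$ is the affine map sending the $k$th vertex of $\Delta_{d-1}$ to $\vec{v}_k$. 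By Lemma~\ref{lemma:linearly_independent_simplex}, the vertices $\vec{v}_k$ are affinely independent, so $T$ is a bijection from $\Delta_{d-1}$ onto $\Delta$ with constant (nonzero) Jacobian, and hence sends uniform measure to uniform measure.

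It therefore suffices to show that the push-forward of the unitarily invariant measure $\mathrm{d}\mu(\psi)$ under $|\psi\rangle \mapsto \vec{p}$ is the flat (Lebesgue) measure on $\Delta_{d-1}$. This is a standard fact: writing $c_k = g_k / \|g\|$ for independent standard complex Gaussians $g_k$, the squared moduli $|g_k|^2$ are i.i.d.\ Exponential$(1)$, and the normalized vector of i.i.d.\ exponentials is precisely the Dirichlet$(1,\ldots,1)$ distribution, which coincides with the uniform distribution on $\Delta_{d-1}$. The $U(1)$ phase redundancy between $S^{2d-1}$ and $\Omega_d$ plays no role, since $\vec{p}$ is phase-invariant.

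The only real obstacle is the identification of the push-forward with the flat Dirichlet distribution, a classical but non-trivial ingredient; once that is accepted, everything else is a routine change of variables enabled by Lemma~\ref{lemma:linearly_independent_simplex}. Composing the two stages produces a uniform density on $\Delta$, proving the theorem.
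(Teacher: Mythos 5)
Your proof is correct and follows essentially the same route as the paper's: simultaneous diagonalization of the commuting $H_i$, the fact that the squared moduli $|c_k|^2$ of a Haar-random pure state are uniformly distributed on the probability simplex (which the paper simply cites, while you supply the standard Gaussian--Dirichlet derivation), and the observation that the resulting affine map onto $\Delta$ has constant Jacobian whose nondegeneracy is exactly Lemma~\ref{lemma:linearly_independent_simplex}. The only difference is stylistic: you phrase the change of variables as a push-forward of measures, whereas the paper carries out the same computation via the Dirac-delta integral and an explicit Jacobian $J_{ij}=\lambda_{ij}-\lambda_{id}$.
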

    
    This result can be extended to any number $m$ of commuting Hermitian operators, provided $m\geq d-1$ and at least $(d-1)$ of them are linearly independent. 
    
    \begin{theorem}\label{thm:generalized_hermitian_operators}
        Let $ {\vec H}=\{H_i\}_{i=1}^m$ be a set of commuting Hermitian matrices of size $d$, diagonalizable by $U$, such that $m\geq d-1$ and $\{UH_iU^\dagger\}\cup \mathbb{I}_d$ spans the whole vector space of diagonal matrices of size $d$. Then, the joint numerical shadow $P_{\vec{H}}(\vec{r})$ of $\vec H$, supported in the simplex $\Delta\subset\mathds{R}^m$, is uniform.
    \end{theorem}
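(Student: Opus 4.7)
My plan is to reduce Theorem~\ref{thm:generalized_hermitian_operators} to Theorem~\ref{thm:hermitian_operators} by exploiting the low-dimensional nature of $\mathrm{span}\{H_i\}\cup\mathbb{I}_d$. First, since all $H_i$ commute, without loss of generality we may take them to be diagonal (the unitary conjugation $U(\cdot)U^\dagger$ acts on pure states by $|\psi\rangle\mapsto U^\dagger|\psi\rangle$, which leaves the unitarily invariant measure $\mu$ invariant, hence leaves $P_{\vec H}$ invariant as well). From the hypothesis that $\{H_i\}\cup\mathbb{I}_d$ spans the whole $d$-dimensional space of diagonal matrices, I can extract $(d-1)$ operators, say after relabeling $H_1,\ldots,H_{d-1}$, which together with $\mathbb{I}_d$ form a basis of that space.

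Next, I apply Lemma~\ref{lemma:linearly_independent_simplex} and Theorem~\ref{thm:hermitian_operators} to the selected subfamily $\vec{H}' = (H_1,\ldots,H_{d-1})$: its joint numerical range $W(\vec H')$ is a non-degenerate $(d-1)$-simplex $\Delta'\subset\mathbb{R}^{d-1}$ and the joint numerical shadow $P_{\vec H'}$ is uniform on $\Delta'$. For the remaining operators $H_k$ with $k\ge d$, the spanning hypothesis gives unique scalars $\alpha_{ki}, \beta_k\in\mathbb{R}$ with $H_k = \sum_{i=1}^{d-1}\alpha_{ki}H_i + \beta_k\mathbb{I}_d$, so that $\langle\psi|H_k|\psi\rangle = \sum_{i=1}^{d-1}\alpha_{ki}\langle\psi|H_i|\psi\rangle + \beta_k$ for every pure state. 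Consequently the map $|\psi\rangle\mapsto\langle\psi|\vec H|\psi\rangle$ factors as $|\psi\rangle\mapsto\langle\psi|\vec H'|\psi\rangle$ followed by the affine map $T:\mathbb{R}^{d-1}\to\mathbb{R}^m$ defined by $T(r_1,\ldots,r_{d-1}) = (r_1,\ldots,r_{d-1}, \sum_i\alpha_{d,i}r_i+\beta_d,\ldots,\sum_i\alpha_{m,i}r_i+\beta_m)$.

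Since $T$ preserves the first $d-1$ coordinates it is injective and affine, and therefore $W(\vec H) = T(\Delta') = \Delta$ is a $(d-1)$-dimensional simplex affinely embedded in $\mathbb{R}^m$. The joint numerical shadow $P_{\vec H}$ is by definition the pushforward $T_*P_{\vec H'}$, and because $T$ is an injective affine map it sends the uniform measure on $\Delta'$ (with respect to $(d-1)$-dimensional Lebesgue measure) to a constant multiple of the induced $(d-1)$-dimensional Hausdorff measure on $\Delta$. This is precisely the statement that $P_{\vec H}$ is uniform on $\Delta$, understood as a measure supported on the $(d-1)$-dimensional simplex living inside $\mathbb{R}^m$.

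The main obstacle I anticipate is conceptual rather than computational: the shadow $P_{\vec H}$ is now a singular measure on $\mathbb{R}^m$ whenever $m>d-1$, so care is needed to state and interpret \emph{uniform} correctly (uniform with respect to the intrinsic volume on $\Delta$, i.e.\ the $(d-1)$-dimensional Hausdorff measure), and to verify that the pushforward of a Lebesgue-uniform measure by an injective affine map is indeed uniform in this intrinsic sense. Once this is made explicit, the argument is essentially a change of variables through $T$, with the Jacobian being the constant linear part of $T$, so no non-trivial calculation remains.
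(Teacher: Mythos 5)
Your proposal is correct and follows essentially the same route as the paper: both split off a subfamily $H_1,\ldots,H_{d-1}$ that together with $\mathbb{I}_d$ spans the space of diagonal matrices, invoke Theorem~\ref{thm:hermitian_operators} for uniformity on the sub-simplex, and observe that the remaining expectation values are affine functions of the first $d-1$ --- your injective affine map $T$ is exactly what the paper encodes in its delta-function factor $\delta\left(\vec{r}_2-\vec{R}_2-T\vec{r}_1\right)$. Your explicit derivation of the coefficients $H_k=\sum_{i=1}^{d-1}\alpha_{ki}H_i+\beta_k\mathbb{I}_d$ and your remark that ``uniform'' must be read with respect to the intrinsic $(d-1)$-dimensional Hausdorff measure on the embedded simplex make precise what the paper leaves implicit, but the underlying argument is the same.
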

    
    The proofs of both theorems, based on Lemma~\ref{lemma:linearly_independent_simplex}, are given in Appendix~\ref{app:proofs}.
    These statements allow us to design a method for finding the analytical formula for the distribution of the operation fidelity for $m$ orthogonal, commuting $d$-dimensional Hermitian Kraus operators that satisfy the assumptions of Theorem~\ref{thm:generalized_hermitian_operators}.
    
    First, we construct a $d$-point regular simplex $\Delta$ formed by their joint numerical range. 
    Then, we perform an integration~(\ref{eq:fid_dist_from_nom_shad}) over the simplex.
    However, due to the uniformity of the shadow $P_{\vec{H}}\left(\vec{r}\right)=\mathrm{vol}(\Delta)^{-1}$ (the inverse of the volume of the simplex), the integration is significantly simplified. 
    Thus, we arrive at the 
    last statement of this subsection.

    \begin{corollary}
        Suppose that $\Phi$ is a channel with $m$ orthogonal, commuting Hermitian Kraus operators of size $d$ that satisfy the assumptions of Theorem~\ref{thm:generalized_hermitian_operators}.
        Then, the distribution of the operation fidelity of $\Phi$ is given by an integration over the joint numerical range, which forms the $d$-point simplex $\Delta \subset {\mathbb R}^m$,
        \begin{equation}\label{eq:fid_dist_uniform_simplex}
	    P(F) = \frac{1}{\mathrm{vol}(\Delta)} \int_{\Delta} \delta(F-|\vec{r}\left(\vec{x}\right)|^2)\mathrm{d}^{d-1}x,
	\end{equation}
    where $\vec{x}$ parametrizes the $\left(d-1\right)$-dimensional affine subspace containing $\Delta$ and 
    $\vec{r}\left(\vec{x}\right)$ 
    represents the corresponding point in $\mathds{R}^m$.
    \end{corollary}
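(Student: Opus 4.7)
The plan is to combine Proposition~\ref{prop:kraus_fidelity} with Theorem~\ref{thm:generalized_hermitian_operators}. First, because every Kraus operator $K_j$ is by assumption Hermitian, its anti-Hermitian part vanishes ($A_j = 0$), and Proposition~\ref{prop:kraus_fidelity} specializes to
\begin{equation}
P(F) = \int_{W} P_{\vec{H}}(\vec{r})\,\delta(F - |\vec{r}|^2)\,\mathrm{d}^m r,
\end{equation}
where $W = W(\vec{H}) \subset \mathds{R}^m$ is now the joint numerical range of the $m$ commuting Hermitian Kraus operators alone, living in $\mathds{R}^m$ rather than $\mathds{R}^{2m}$.

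Next, the hypotheses of the corollary are precisely those of Theorem~\ref{thm:generalized_hermitian_operators}, which I would invoke to conclude that $W$ is a $(d-1)$-dimensional simplex $\Delta$ lying in an affine subspace of $\mathds{R}^m$, and that the shadow $P_{\vec{H}}$ is uniform on $\Delta$. Concretely, the shadow is a singular measure in $\mathds{R}^m$, supported on $\Delta$ and carrying density $1/\mathrm{vol}(\Delta)$ with respect to the $(d-1)$-dimensional Hausdorff measure on the affine hull of $\Delta$. Introducing an isometric parametrization $\vec{x} \in \mathds{R}^{d-1} \mapsto \vec{r}(\vec{x}) \in \mathds{R}^m$ of this affine subspace pulls the Hausdorff measure back to ordinary Lebesgue measure $\mathrm{d}^{d-1}x$, and substitution collapses the $m$-dimensional integral above to the $(d-1)$-dimensional integral of Eq.~(\ref{eq:fid_dist_uniform_simplex}).

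The main subtle point, and the one I would pay most careful attention to, is this passage from an integral against the singular shadow in $\mathds{R}^m$ to an integral over a $(d-1)$-dimensional set in the ambient space. One must check that the factor $\delta(F - |\vec{r}|^2)$ carries through unchanged when descending to the affine hull. This is immediate once the parametrization is chosen to be isometric, since then $|\vec{r}(\vec{x})|^2$ is the actual Euclidean squared norm of the image point in $\mathds{R}^m$ and no Jacobian correction appears. After that, the constant factor $1/\mathrm{vol}(\Delta)$ pulls out of the integral and what remains is exactly Eq.~(\ref{eq:fid_dist_uniform_simplex}). Because the two genuinely nontrivial pieces -- Proposition~\ref{prop:kraus_fidelity} and Theorem~\ref{thm:generalized_hermitian_operators} -- have already been established, the proof should be quite short once the measure-theoretic bookkeeping is laid out cleanly.
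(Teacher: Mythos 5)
Your proposal is correct and follows essentially the same route the paper takes: the paper obtains the corollary by specializing Proposition~\ref{prop:kraus_fidelity} to Hermitian Kraus operators (so the integral lives in $\mathds{R}^m$ rather than $\mathds{R}^{2m}$), invoking Theorem~\ref{thm:generalized_hermitian_operators} to replace the shadow by the constant $\mathrm{vol}(\Delta)^{-1}$ on the simplex, and collapsing the integral to the $(d-1)$-dimensional affine hull. Your explicit remark that the parametrization $\vec{x}\mapsto\vec{r}(\vec{x})$ must be isometric -- so that $\mathrm{d}^{d-1}x$ matches the Hausdorff measure and $|\vec{r}(\vec{x})|^2$ needs no Jacobian correction inside the delta -- makes precise a point the paper leaves implicit, but it is the same argument.
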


    A convenient representation of probability distributions corresponding to numerical shadows of some operators
    can be obtained with help of the notion of basis spline, also called {\sl B-spline}~\cite{Dunkl_2015} --- see Appendix~\ref{app:b-spline}.
    Such splines are linked to the generalized Dirichlet distributions, from which real shadows of real, symmetric matrices can be obtained.
    In an analogous way one can express the 
    distribution~(\ref{eq:fid_dist_uniform_simplex}) 
    by the derivative of the relative volume of 
    the intersection of a ball and a simplex, 
 \begin{equation}
    P\left(F\right)=\frac{\mathrm{d}}{\mathrm{d}F}
        \frac{\mathrm{vol}\left(B_F \cap \Delta \right)}{\mathrm{vol}\left(\Delta\right)},
    \end{equation}
    where $B_F=B\left(\vec{0},\sqrt{F}\right)$ represents the $m$-dimensional ball of radius $\sqrt{F}$, with the center at the point $\vec{0}\in\mathds{R}^m$ and immersed in the same $m$-dimensional space as the $d$-point simplex $\Delta$. 

    Finally, in Fig.~\ref{fig:examples_4-level_PDF} we use Eq.~(\ref{eq:fid_dist_uniform_simplex}) to visualize the operation fidelity distributions for exemplary quantum channels that satisfy the assumptions of Theorem~\ref{thm:generalized_hermitian_operators}.

    \onecolumngrid
    
    \begin{figure}[H]
        \begin{tikzpicture}[scale=0.885, every node/.style={scale=0.885}]
    \node (a) at (0,0) {\includegraphics[width=0.48\textwidth]{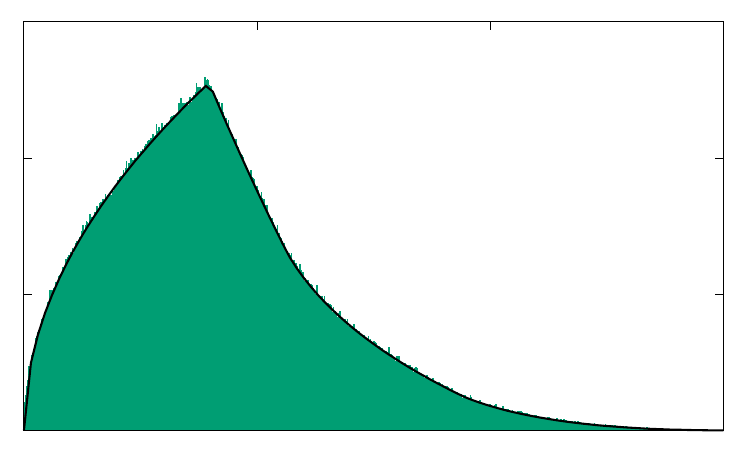}};
    \node (a) at (0,+2.8) {(a)};
    \node (a) at (0,-2.8) {$F$};
    \node (a) at (-4,-2.7) {$0$};
    \node (a) at (-1.33,-2.7) {$0{.}33$};
    \node (a) at (1.33,-2.7) {$0{.}67$};
    \node (a) at (4,-2.7) {$1$};
    \node (a) at (-4.5,0) {\begin{turn}{90}$P(F)$\end{turn}};
    \node (a) at (-4.3,-2.3) {$0$};
    \node (a) at (-4.5,-0.77) {$1{.}17$};
    \node (a) at (-4.5,0.77) {$2{.}33$};
    \node (a) at (-4.4,2.3) {$3{.}5$};
\end{tikzpicture}
        \hspace*{\fill}
        \begin{tikzpicture}[scale=0.885, every node/.style={scale=0.885}]
    \node (a) at (0,0) {\includegraphics[width=0.48\textwidth]{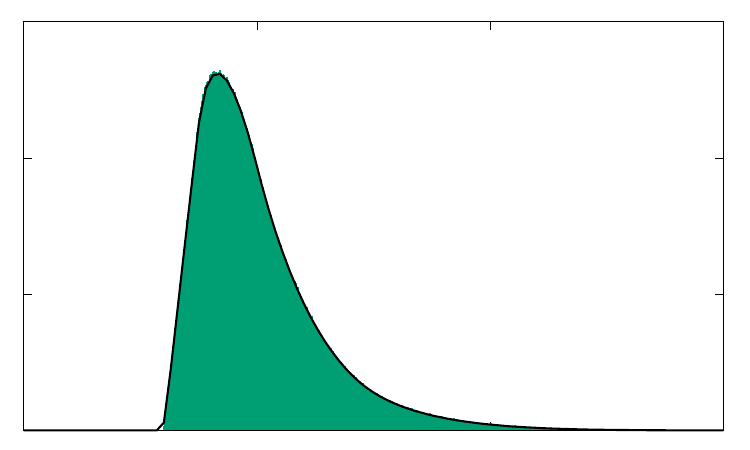}};
    \node (a) at (0,+2.8) {(b)};
    \node (a) at (0,-2.8) {$F$};
    \node (a) at (-4,-2.7) {$0$};
    \node (a) at (-1.33,-2.7) {$0{.}33$};
    \node (a) at (1.33,-2.7) {$0{.}67$};
    \node (a) at (4,-2.7) {$1$};
    \node (a) at (-4.5,0) {\begin{turn}{90}$P(F)$\end{turn}};
    \node (a) at (-4.3,-2.3) {$0$};
    \node (a) at (-4.5,-0.77) {$2{.}33$};
    \node (a) at (-4.5,0.77) {$4{.}67$};
    \node (a) at (-4.3,2.3) {$7$};
\end{tikzpicture}
        
        \caption{Distribution $P\left(F\right)$ of the operation fidelity for exemplary Schur channels: (a) acting on a four-level system and defined by three diagonal Kraus operators $K_1=\frac{1}{\sqrt{2}}\mathrm{diag}\left(1,\frac{1}{2},-\frac{1}{2},-1\right)$, $K_2=\frac{\sqrt{3}}{2\sqrt{2}}\mathrm{diag}\left(0,1,1,0\right)$ and $K_3=\frac{1}{\sqrt{2}}\mathrm{diag}\left(1,-1,1,-1\right)$, and (b) acting on a five-level system and defined by orthogonal projection operators $K_j=\ket{j}\bra{j}$ for $j\in\left\{0,1,2,3,4\right\}$. Solid curves denote distributions obtained using Eq.~(\ref{eq:fid_dist_from_nom_shad}) with uniform numerical shadow and underlying histograms refer to numerical data.}
        \label{fig:examples_4-level_PDF}
    \end{figure}
    \twocolumngrid
    
    \subsection{Unitary channels}\label{subsec:arbitrary_unitary_channel}
    Theorems proved in the previous subsection enable us to introduce a method for determining the distribution of the operation fidelity for a unitary channel $U$ of an arbitrary dimension $d$. 
    Since any unitary operator is normal, its numerical range $W\left(U\right)$ is equal to the convex hull of its spectrum, which forms a polygon inscribed in the unit circle.
    As before, we shall base our method on the fact that the corresponding numerical shadow, $P_U\left(z\right)$, is a probability measure obtained by projecting the uniform measure on a $\left(d-1\right)$-simplex into the polygon $W\left(U\right)$. 
    
    We shall denote the Hermitian and the anti-Hermitian part of $U=U_H + iU_A$ as $U_H$ and $U_A$, respectively.
    Suppose that $\{H_i\}$ is a collection of $m$ Hermitian operators of size $d$, commuting with $U$, such that the set $\{U_H,U_A,\mathbb{I}_d\}\cup\{H_i\}$ satisfies the conditions of Theorem~\ref{thm:generalized_hermitian_operators} -- i.e.\ after a simultaneous diagonalization, they span the space of diagonal matrices.
    
    Then, applying Theorem~\ref{thm:generalized_hermitian_operators}, we observe that the joint numerical shadow $P_{U_H,U_A,H_1,...}(\vec{r})$ is uniform on the $d$-point simplex. 
    Therefore, to obtain the operation fidelity of the channel defined as $\Phi(\rho) = U\rho U^\dagger + \sum_i H_i \rho H^\dagger_i$ it suffices to use Proposition~\ref{prop:kraus_fidelity}. 
    
    Now, the crucial point is that this setup holds independently of the norms of matrices $\{H_i\}$.
    Thus, to find the numerical shadow of $U$ together with the distribution of operation fidelity we introduce a one-parameter family of channels $\Phi_\varepsilon$, defined as
    \begin{equation}\label{eq:aux_channel}
        \Phi_\varepsilon\left(\rho\right)=\left(1-\varepsilon\right)U\rho U^\dagger+\varepsilon\sum_{i=1}^m H_i \rho H_i^\dagger, 
    \end{equation}
    where $0\leq\varepsilon\leq 1$. 
    Then, by taking a limit $\varepsilon\rightarrow 0$, we obtain the channel that resembles the original unitary one, $\lim_{\varepsilon\to 0} \Phi_\varepsilon (\rho ) = U\rho U^\dagger$. 

    Simultaneously, for any $\varepsilon > 0$, the joint numerical shadow will be uniform. 
    Therefore, to find the distribution of the operation fidelity, we use Theorem~\ref{thm:generalized_hermitian_operators} for $\Phi_\varepsilon$, which gives the desired distribution of the original unitary channel with an arbitrary precision, limited by $\varepsilon$. 
    The convergence to the desired distribution of the operation fidelity for an exemplary unitary channel $U$ is shown in Fig.~\ref{fig:conv_ex}. 
    
    As a side note, observe that the above setup can be applied to any normal matrix to obtain its numerical shadow by taking sufficiently many commuting Hermitian operators. 
    Similarly, the joint numerical shadow for any set of commuting normal operators can be derived as well, i.e.\ for a set of $m$ Hermitian matrices, where $m\geq d-1$.

    \begin{center}
    \begin{figure}[H]
	\centering
	\begin{tikzpicture}[scale=0.65, every node/.style={scale=0.65}]
	\node (a) at (0,0) {\includegraphics[width=0.7\textwidth]{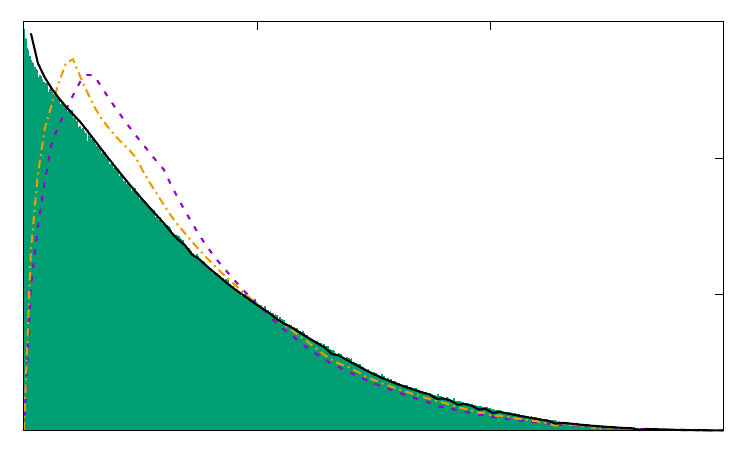}};
	\node (a) at (0,-4.2) {\scalebox{1.54}{$F$}};
    \node (a) at (-5.9,-3.8) {\scalebox{1.54}{$0$}};
	\node (a) at (-1.97,-3.8) {\scalebox{1.54}{$0{.}33$}};
	\node (a) at (1.97,-3.8) {\scalebox{1.54}{$0{.}67$}};
	\node (a) at (5.9,-3.8) {\scalebox{1.54}{$1$}};
	\node (a) at (-6.6,0) {\scalebox{1.54}{\begin{turn}{90}$P(F)$\end{turn}}};
	\node (a) at (-6.2,-3.4) {\scalebox{1.54}{$0$}};
	\node (a) at (-6.5,-1.13) {\scalebox{1.54}{$1{.}33$}};
	\node (a) at (-6.5,1.13) {\scalebox{1.54}{$2{.}67$}};
    \node (a) at (-6.2,3.4) {\scalebox{1.54}{$4$}};
\end{tikzpicture}
    \caption{Distribution $P\left(F\right)$ of operation fidelity for a $4$-dimensional channel defined by a unitary operator $U = \text{diag}(1,\mathrm{e}^{\frac{\mathrm{i}\pi}{3}},-1,-\mathrm{i})$, shown using green bars, obtained numerically. Operation fidelity distributions evaluated for auxiliary channels $\widetilde{\Phi}$ defined by Eq.~(\ref{eq:aux_channel}), with $m=1$ and $H_1=\text{diag}(1,-1,1,-1)$, are shown for different values of $\varepsilon$ ($0{.}2$ for the purple, dashed line, $0{.}1$ for the yellow, dot-dashed line and $0{.}01$ for the black, solid line) and obtained using Eq.~(\ref{eq:fid_dist_uniform_simplex}). 
    Distribution $P(F)$ for $\widetilde{\Phi}$ converges to the operation fidelity distribution of $U$ as $\varepsilon$ approaches $0$.}\label{fig:conv_ex}
    \end{figure}
    \end{center}
	
	\section{Conclusions} 
	Generalized quantum channels are not easy to characterize. 
	Here, we expanded upon one of their aspects; namely, the operational fidelity -- one of the measures of distance to the identity channel. 
	Interestingly, recalling an important notion from linear algebra -- numerical range -- we were able to find the distributions $P\left(F\right)$ of operation fidelity in various low-dimensional cases. We discussed the cases of single-qubit mixed unitary channels and single-qutrit unitary channels.
	
	Furthermore, we characterized the Schur channels for an arbitrary dimension $d$, as well as provided a method for determining the distribution of operation fidelity of a unitary channel in the general case.
	Additionally, extremal fidelities $F_{\text{min}}$ and $F_{\text{max}}$ in the general cases of Schur channels and single-qubit mixed unitary channels were also described.
	
	These results widen our understanding of quantum channels and can be used for channel estimation and discrimination. 
    We propose a novel setup for this task that utilizes fidelity between the unknown initial and final state, as discussed in Subsection~\ref{subsection:probabilistic_estimation}.
	
	Naturally, several questions ensue our research.
	First of all, in which classes of channels in an arbitrary dimension can the operation fidelity distribution be found?
	It is possible that the generalizations of Pauli channels discussed in Section~\ref{sec:PDF_for_qubit_qutrit}, such as the Weyl-Heisenberg channels, shall also admit simple characterization. 
	
	An estimate of how many samples are needed to have a sufficiently high confidence level for channel estimation would be also interesting, with a wide variety of different possible setups, e.g.\ two channels, family indexed by a parameter, etc.
	Finally, it is important to apply the results of the present paper to real-world devices. 
	In order to do so, channels that occur in a laboratory should be parametrized in a manner facilitating their description using our setup.
	We hope that both theoretical and experimental progress will follow due to their conceivable importance for present-era quantum devices.

    We are grateful to Chi-Kwong Li, \L{}ukasz Pawela and Zbigniew Pucha\l{}a for fruitful discussions.
    Financial support by Narodowe Centrum Nauki under the Maestro grant number DEC-2015/18/A/ST2/00274 and by Foundation for Polish Science under the Team-Net project no.\ POIR.04.04.00-00-17C1/18-00 is gratefully acknowledged. 
    GRM additionally acknowledges support from 
    ERC AdG NOQIA; Ministerio de Ciencia y Innovation Agencia Estatal de Investigaciones (PGC2018-097027-B-I00/10.13039/501100011033, CEX2019-000910-S/10.13039/501100011033, Plan National FIDEUA PID2019-106901GB-I00, FPI, QUANTERA MAQS PCI2019-111828-2, QUANTERA DYNAMITE PCI2022-132919, Proyectos de I+D+I “Retos Colaboración” QUSPIN RTC2019-007196-7); MCIN Recovery, Transformation and Resilience Plan with funding from European Union NextGenerationEU (PRTR C17.I1); Fundació Cellex; Fundació Mir-Puig; Generalitat de Catalunya (European Social Fund FEDER and CERCA program, AGAUR Grant No. 2017 SGR 134, QuantumCAT \ U16-011424, co-funded by ERDF Operational Program of Catalonia 2014-2020); Barcelona Supercomputing Center MareNostrum (FI-2022-1-0042); EU Horizon 2020 FET-OPEN OPTOlogic (Grant No 899794); EU Horizon Europe Program (Grant Agreement 101080086 — NeQST), National Science Centre, Poland (Symfonia Grant No. 2016/20/W/ST4/00314); European Union’s Horizon 2020 research and innovation program under the Marie-Skłodowska-Curie grant agreement No 101029393 (STREDCH) and No 847648 (“La Caixa” Junior Leaders fellowships ID100010434: LCF/BQ/PI19/11690013, LCF/BQ/PI20/11760031, LCF/BQ/PR20/11770012, LCF/BQ/PR21/11840013). Views and opinions expressed in this work are, however, those of the author(s) only and do not necessarily reflect those of the European Union, European Climate, Infrastructure and Environment Executive Agency (CINEA), nor any other granting authority. Neither the European Union nor any granting authority can be held responsible for them.
    
    \appendix
    \section{Different channels with the same operation fidelity distribution}\label{app:same_fidelity}	
    Here, we prove the statement from Subsection~\ref{subsection:probabilistic_estimation} concerning channels with the same distribution of the operation fidelity. 
	\begin{lemma}
	Let $\Phi$ and $\Phi'$ be 2 quantum channels acting on a $d$-dimensional Hilbert space $\mathbb{C}^d$, each defined by $m$ Kraus operators $\left\{K_j\right\}_{j=1}^m=\left\{H_j+\mathrm{i}A_j\right\}_{j=1}^m$ and $\left\{K'_j\right\}_{j=1}^m=\left\{H'_j+\mathrm{i}A'_j\right\}_{j=1}^m$, where $H_j$, $A_j$, $H'_j$ and $A'_j$ are Hermitian for every $j$. We define $\widetilde{K}=\left(H_1,H_2,\ldots,H_m,A_1,A_2,\ldots,A_m\right)$ and $\widetilde{K}'=\left(H'_1,H'_2,\ldots,H'_m,A'_1,A'_2,\ldots,A'_m\right)$. Then if
	\begin{equation}
	    \widetilde{K}'_i=\sum_{j=1}^{2m}O_{ij}U^\dagger\widetilde{K}_j U
	\end{equation}
	for a certain orthogonal $O\in\mathrm{O}\!\left(2m\right)$ and unitary $U\in\mathrm{U}\!\left(d\right)$, the channels $\Phi$ and $\Phi'$ share the same distribution $P\left(F\right)$ of operation fidelity.
	\end{lemma}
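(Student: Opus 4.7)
The plan is to express the operation fidelity as the squared Euclidean norm of the joint-numerical-range vector of the Hermitian/anti-Hermitian components, then show that the hypothesis forces the fidelity of $\Phi'$ evaluated at $\ket\psi$ to coincide with that of $\Phi$ evaluated at $U\ket\psi$, and finally invoke unitary invariance of the pure-state measure to conclude that the two distributions are equal.

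Concretely, the first step is to apply Eq.~(\ref{eq:fidelity_Kraus}) and its expansion in terms of $H_j, A_j$ to write
\begin{equation}
F(\Phi,\ket\psi)=\sum_{j=1}^{2m}\bra\psi \widetilde{K}_j\ket\psi^{\,2}=\bigl|\vec v(\psi)\bigr|^2,
\end{equation}
where $\vec v(\psi)\in\mathbb R^{2m}$ has components $v_j(\psi)=\bra\psi\widetilde K_j\ket\psi$. The analogous vector $\vec v\,'(\psi)$ for $\Phi'$ satisfies, by the assumed relation,
\begin{equation}
v'_i(\psi)=\sum_{j=1}^{2m}O_{ij}\bra\psi U^\dagger \widetilde K_j U\ket\psi=\sum_{j=1}^{2m}O_{ij}\,v_j(U\psi),
\end{equation}
so that $\vec v\,'(\psi)=O\,\vec v(U\psi)$.

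The second step is an immediate consequence: since $O\in\mathrm{O}(2m)$ preserves the Euclidean norm,
\begin{equation}
F(\Phi',\ket\psi)=\bigl|O\,\vec v(U\psi)\bigr|^2=\bigl|\vec v(U\psi)\bigr|^2=F(\Phi,U\ket\psi).
\end{equation}
The final step invokes the fact that the Fubini–Study measure $\mathrm d\mu(\psi)$ on pure states, used to define $P(F)$ through Proposition~\ref{prop:kraus_fidelity}, is unitarily invariant. Hence the pushforward of $\mathrm d\mu$ under $\ket\psi\mapsto U\ket\psi$ coincides with $\mathrm d\mu$ itself, which means $F(\Phi,U\psi)$ and $F(\Phi,\psi)$ are identically distributed. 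Combined with the pointwise identity above, this gives $P_{\Phi'}(F)=P_\Phi(F)$.

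I do not anticipate a serious obstacle; the whole argument reduces to bookkeeping with the two symmetries (orthogonal transformations on the $\mathbb R^{2m}$ side and unitary ones on the Hilbert-space side), each of which preserves the relevant structure — the Euclidean norm for $O$ and the state measure for $U$. The only point requiring minor care is confirming that $\bra\psi U^\dagger \widetilde K_j U\ket\psi=\bra{U\psi}\widetilde K_j\ket{U\psi}$ and that the components $H_j,A_j$ of the rotated Kraus set remain Hermitian (which is automatic since orthogonal real combinations of Hermitian matrices are Hermitian), so that the right-hand side of the hypothesis is a legitimate Hermitian/anti-Hermitian decomposition of a valid Kraus set.
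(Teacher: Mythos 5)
Your proposal is correct and follows essentially the same route as the paper's own proof: both reduce the fidelity to the squared Euclidean norm $\bigl|\vec v(\psi)\bigr|^2$ of the vector of expectation values of the Hermitian components, absorb the orthogonal matrix $O$ via norm preservation, and then invoke unitary invariance of the pure-state measure under $\ket{\psi}\mapsto U\ket{\psi}$. The paper merely phrases these two steps inside the delta-function integral for $P(F)$ rather than as your pointwise identity $F(\Phi',\ket{\psi})=F(\Phi,U\ket{\psi})$ followed by a pushforward argument, a purely presentational difference.
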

	\begin{proof}
    Using Eq.~(\ref{eq:def_joint_numerical_shadow}) and Eq.~(\ref{eq:fid_dist_from_nom_shad}), we express the operation fidelity distribution of channel $\Phi'$ as
    \begin{equation}
        P'\left(F\right)=\int_{\Omega_d}\delta\left(F-\sum_{i=1}^{2m}\bra{\psi}\widetilde{K}'_i\ket{\psi}^2\right)\mathrm{d}\mu\left(\psi\right).
    \end{equation}
    Since $\widetilde{K}'_i=\sum_{j=1}^{2m}O_{ij}U\widetilde{K}_j U^\dagger$, and due to the orthogonality of $O$
    \begin{equation}
        P'\left(F\right)=\int_{\Omega_d}\delta\left(F-\sum_{i=1}^{2m}\bra{\psi}U^\dagger\widetilde{K}_i U\ket{\psi}^2\right)\mathrm{d}\mu\left(\psi\right).
    \end{equation}
    Furthermore, $\mathrm{d}\mu\left(\psi\right)$ is a measure on the set $\Omega_d$ of pure states that is invariant under unitary transformations. Thus, we can change variables from $\ket{\psi}$ to $\ket{\phi}=U\ket{\psi}$.
    This implies that both $\Phi$ and $\Phi'$ have the same operation fidelity distributions.
	\end{proof}

\section{Proofs of theorems from Section~\ref{sec:proposed_method}}\label{app:proofs}
For the convenience of the reader we rewrite here the three statements made in the main body of the text and then prove them.
First result from Section~\ref{sec:proposed_method} concerns the connection between the orthogonality of commuting Hermitian operators and geometrical properties of their joint numerical shadow.

\begin{manuallemma}{\ref{lemma:linearly_independent_simplex}}
    For a set of $(d-1)$ commuting Hermitian operators $\{H_i\}_{i=1}^{d-1}$ of size $d$, the following statements are equivalent:
    \begin{enumerate}
        \item[(1)] $\{H_i\}\cup\mathbb{I}_d$ are linearly independent, i.e.\ their diagonalized forms span the entire space of diagonal matrices,
        \item[(2)] their joint numerical range $W(H_i)$ forms a non-degenerate simplex of dimension $(d-1)$.
    \end{enumerate}
\end{manuallemma}
\begin{proof}
    It suffices to verify their linear independence in basis in which all $\{H_i\}$ are diagonal.
    To do that, we shall use the previous notation, where $\{{\lambda_{ij}}\}_{j=1}^{d}$ is the set of eigenvalues of matrix $H_i$. 
    Let us start with the direction $(2) \implies (1)$. 
    
    The joint numerical range is spanned by $d$ points $\{\lambda_{1j},...,\lambda_{(d-1)j}\}_{j=1}^d$ that form a $(d-1)$-simplex $\Delta$. 
    Assuming that the simplex is non-degenerate, it has a non-zero volume $\mathrm{vol}\, (\Delta)$ given by the determinant of a $d\times d$ matrix~\cite{Stein_1966}
    \begin{equation}\label{eq:independent_volume_simplex}
        \mathrm{vol}\, (\Delta) = \frac{1}{(d-1)!} \,\left|\mathrm{det} 
        \begin{pmatrix}
                \lambda_{11} & ... & \lambda_{1d}\\
                \vdots & \ddots & \vdots \\
                \lambda_{(d-1)1} & ... & \lambda_{(d-1)d} \\ 
                1 & ... & 1
        \end{pmatrix}\right|.
    \end{equation}
    Since the volume is non-zero, all rows are linearly independent. 
    However, these rows form diagonals of $\{H_i\}\cup\mathbb{I}_d$ matrices, making them also linearly independent. 
    
    Direction $(1) \implies (2)$ is straightforward, since linearly independent matrices will form first $(d-1)$ rows of the matrix from Eq.~(\ref{eq:independent_volume_simplex}), while the last row is formed by the identity matrix $\mathbb{I}_d$.
    Therefore, the determinant is non-zero, making the simplex non-degenerate.
\end{proof}

Then, we use the above lemma to prove the uniformity of the joint numerical shadow for the case of orthogonal, commuting Hermitian operators.

\begin{manualtheorem}{\ref{thm:hermitian_operators}}
    Let $\vec{H} = \{H_i\}_{i=1}^{d-1}$ be a set of $(d-1)$ commuting Hermitian operators of size $d$ such that the joint numerical range forms a $(d-1)$-dimensional simplex $\Delta$ (or, equivalently, such that $\{H_i\}\cup\mathbb{I}_d$ are linearly independent).
    Then, the joint numerical shadow $P_{\vec{H}}(\vec{r})$ is uniform in $\Delta$.
\end{manualtheorem}
\begin{proof}
    Let us start by recalling Eq.~(\ref{eq:def_joint_numerical_shadow}), which defines the joint numerical shadow as an integral over the set $\Omega_d$ of all pure states of dimension $d$,
    \begin{equation}
        P_{\vec{H}}(\vec{r})=\int_{\Omega_d}\delta\left(\vec{r}-\bra{\psi}\vec{H}\ket{\psi}\right)\mathrm{d}\mu\left(\psi\right),
    \end{equation}
    where $\bra{\psi}\vec{H}\ket{\psi}$ denotes an $(d-1)$-dimensional vector $\left(\bra{\psi}H_1\ket{\psi},...,\bra{\psi}H_{d-1
    }\ket{\psi}\right)$. 
    Operators $H_i$ commute, hence without loss of generality we may use the basis $\ket{j}$ in which these operators are diagonal. 
    Following Section~\ref{sec:diagonal_Kraus}, we denote their eigenvalues by $\lambda_{ij}$ so that
    $H_i = \sum_{j=1}^d \lambda_{ij} \ket{j}\bra{j}$. 
    
    Decomposing the state $\ket{\psi} = \sum_j c_j \ket{j}$, we arrive at the following expression
    \begin{equation}
    \begin{split}
        &P_{\vec{H}}(\vec{r}) = \\ &=\int_{\Omega_d}\delta \left(\vec{r} - \left[\sum_{j=1}^d |c_j|^2 \lambda_{1j},...,\sum_{j=1}^d |c_j|^2 \lambda_{(d-1)j}\right] \right) \mathrm{d}\mu\left(\psi\right).
    \end{split}
    \end{equation}
    However, since squared absolute values of coefficients in any basis are distributed as a probability vector uniform in the probability simplex $\Delta_{d-1} = \{[x_1,...,x_d]|\sum_{j=1}^d x_j = 1 \land x_j\geq 0\}$~\cite{Dunkl_2011}, we can change the integration domain from pure states with the Haar measure into a simplex parametrized with $(d-1)$ variables $\vec{x} = \{x_j\}_{j=1}^{d-1}$, with a uniform measure on the simplex
    \begin{widetext}
    \begin{equation}
        P_{\vec{H}}(\vec{r}) = \int_{\Delta_{d-1}}\delta \left(\vec{r} - \left[\sum_{j=1}^{d-1} x_j \lambda_{1j}+ \bigg(1-\sum_{j=1}^{d-1} x_j\bigg)\lambda_{1d},...,\sum_{j=1}^{d-1} x_j \lambda_{(d-1)j} + \bigg(1-\sum_{j=1}^{d-1} x_j\bigg)\lambda_{(d-1)d}\right] \right) \mathrm{d}\vec{x}.
    \end{equation}        
    \end{widetext}
    
    Subsequently, we define $(d-1)$ new variables $y_i = \sum_{j=1}^{d-1} x_j \lambda_{ij}+ \big(1-\sum_{j=1}^{d-1} x_j\big)\lambda_{id}$. 
    The Jacobian matrix between variables $\vec{x}$ and $\vec{y}$ reads $J_{ij} = \frac{\partial y_i}{\partial x_j} = \lambda_{ij}-\lambda_{id}$, thus it is independent of $\vec{y}$ 
    \begin{equation}\label{eq:final_shadow_theorem}
    \begin{split}
        P_{\vec{H}}(\vec{r}) &= \int_{\Delta}\delta \left(\vec{r} - \left[y_1,...,y_{d-1}\right] \right)|\textrm{det}\, J^{-1}| \mathrm{d}\vec{y} = \\ &=|\textrm{det}\, J^{-1}|\int_{\Delta}\delta \left(\vec{r} - \left[y_1,...,y_{d-1}\right] \right) \mathrm{d}\vec{y}.     
    \end{split}      
    \end{equation}
    Using elementary operations on matrices, it is easy to prove that the Jacobian is proportional to the volume of the simplex, given by Eq.~(\ref{eq:independent_volume_simplex}), formed by the eigenvalues of the Hermitian matrices under consideration~\cite{Stein_1966}, $|\mathrm{det}\, J| = (d-1)! \, \mathrm{vol}_{d-1}\, \big(\Delta\big)$.
    Consequently, since we assumed that $\{H_i\}\cup\mathbb{I}_d$ are linearly independent and $\Delta$ is a non-degenerate $(d-1)$-simplex (Lemma~\ref{lemma:linearly_independent_simplex}), the invertibility of Jacobian matrix is assured.
    
    Finally, we observe that the value of the integral on the right-hand side of Eq.~(\ref{eq:final_shadow_theorem}) equals 1 if and only if $\vec{r}$ belongs to $\Delta$, i.e.\ the joint numerical range of operators $\{H_i\}$. 
    Therefore, due to the independence of the Jacobian of $\vec{r}$, we conclude that the joint numerical shadow is uniform in the simplex $\Delta$.     
\end{proof}

Finally, Theorem~\ref{thm:hermitian_operators} can be generalized by adding more commuting Hermitian operators to a set of operators that already satisfies the assumptions of the said theorem.

\begin{manualtheorem}{\ref{thm:generalized_hermitian_operators}}
    Let $ {\vec H}=\{H_i\}_{i=1}^m$ be a set of commuting Hermitian matrices of size $d$, diagonalizable by $U$, such that $m\geq d-1$ and $\{UH_iU^\dagger\}\cup \mathbb{I}_d$ spans the whole vector space of diagonal matrices of size $d$. Then, the joint numerical shadow $P_{\vec{H}}(\vec{r})$ of $\vec H$, supported in the simplex $\Delta\subset\mathds{R}^m$, is uniform.
\end{manualtheorem}
\begin{proof}
    First of all, let us observe that $m\geq d-1$. 
    Therefore, we can split the set of Hermitian matrices into a part $\{H_i\}_{i=1}^{d-1}$ which, together with the identity matrix, will span the space of diagonal matrices and the rest $\{H_i\}_{i=d}^{m}$. 
    Then, the Dirac delta in the formula for the joint numerical shadow can be separated into two parts:
    \begin{widetext}
    \begin{equation}
    \begin{split}
        P_{\vec{H}}(\vec{r}) &= \int_{\Omega_d}\delta \left(\vec{r} - \left[\sum_{j=1}^d |c_j|^2 \lambda_{1j},...,\sum_{j=1}^d |c_j|^2 \lambda_{mj}\right] \right) \mathrm{d}\mu\left(\psi\right) = \\
        &= \int_{\Omega_d}\delta\left(\vec{r_1} - \left[\sum_{j=1}^d |c_j|^2 \lambda_{1j},...,\sum_{j=1}^d |c_j|^2 \lambda_{(d-1)j}\right] \right)
        \delta\left(\vec{r_2} - \left[\sum_{j=1}^d |c_j|^2 \lambda_{dj},...,\sum_{j=1}^d |c_j|^2 \lambda_{mj}\right] \right)\mathrm{d}\mu\left(\psi\right),
    \end{split}
    \end{equation}
    \end{widetext}
    where $m$-dimensional vector $\vec{r}$ was split into two parts, $\vec{r} = [\vec{r_1},\vec{r_2}]$.
    Now, we can perform this integration by utilizing techniques used in the proof of Theorem~\ref{thm:hermitian_operators}. In this way, we obtain a formula of the following form:
    \begin{equation}
        P_{\vec{H}}\left(\vec{r}\right)=P_{\{H_i\}_{i=1}^{d-1}}\left(\vec{r}_1\right)\delta\left(\vec{r}_2-\vec{R}_2-T\vec{r}_1\right),
    \end{equation}
    where $\vec{R}_2\in\mathds{R}^{m-d+1}$ is a certain constant vector, $T$ -- a certain linear transformation from $\mathds{R}^{d-1}$ to $\mathds{R}^{m-d+1}$ and $P_{\{H_i\}_{i=1}^{d-1}}\left(\vec{r}_1\right)$ -- numerical shadow of the collection of operators $\{H_i\}_{i=1}^{d-1}$. Therefore the, numerical shadow of a collection of operators $\{H_i\}_{i=1}^m$ is uniform in the numerical range of that collection, which is contained in the affine subspace of $\mathds{R}^m$ defined by $\vec{r}_2=\vec{R}_2+T\vec{r}_1$.
\end{proof}

\section{Definition of B-spline}\label{app:b-spline}
For the convenience of the reader we recall here the definition of the measure version of B-spline, as given in~\cite{Dunkl_2015}.

Let $A\subset\mathds{R}^m$ be a set and $\Delta\subset\mathds{R}^m$ -- a non-degenerate simplex. The measure version of B-spline of order $0$ is then defined by the following expression:
    \begin{equation}
          \mathcal{M}_{\Delta}
         \left(A\right)=\frac{\mathrm{vol}\left(\Delta\cap A\right)}{\mathrm{vol}\left(\Delta\right)},
    \end{equation}
therefore it corresponds exactly to the relative volume of the intersection of the set $A$ and the $d$-point simplex $\Delta$ with respect to that simplex.
        
	\bibliographystyle{unsrt}
	\bibliography{bibliography}

\begin{thebibliography}{10}

\bibitem{Nielsen_2000}
M.~A. Nielsen and I.~L. Chuang.
\newblock {\em Quantum {Computation} and {Quantum} {Information}}.
\newblock Cambridge University Press, 2000.

\bibitem{Lidar_2013}
D.~Lidar and T.~Brun.
\newblock {\em Quantum Error Correction}.
\newblock Cambridge University Press, 2013.

\bibitem{Junan_2021}
J.~Lin, J.~J. Wallman, I.~Hincks, and R.~Laflamme.
\newblock Independent state and measurement characterization for quantum
  computers.
\newblock {\em Phys. Rev. Research}, 3:033285, 2021.

\bibitem{Qiu_2023}
X.~Qiu and L.~Chen.
\newblock Quantum {Wasserstein} distance between unitary operations, 2023.

\bibitem{Pirandola_2019}
S.~Pirandola, R.~Laurenza, C.~Lupo, and J.~L. Pereira.
\newblock Fundamental limits to quantum channel discrimination.
\newblock {\em Npj Quantum Inf.}, 5:50, 2019.

\bibitem{Krawiec_2020}
A.~Krawiec, \L{}. Pawela, and Z.~Pucha\l{}a.
\newblock Discrimination of {POVMs} with rank-one effects.
\newblock {\em Quantum Inf. Process.}, 19:428, 2020.

\bibitem{Jozsa_1994}
R.~Jozsa.
\newblock Fidelity for mixed quantum states.
\newblock {\em J. Mod. Opt.}, 41:2315--2323, 1994.

\bibitem{Bowdrey_2002}
M.~D. Bowdrey, D.~K.~L. Oi, A.~J. Short, K.~Banaszek, and J.~A. Jones.
\newblock Fidelity of single qubit maps.
\newblock {\em Phys. Lett. A}, 294:258--260, 2002.

\bibitem{Zyczkowski_2005}
K.~\.{Z}yczkowski and H.-J. Sommers.
\newblock Average fidelity between random quantum states.
\newblock {\em Phys. Rev. A}, 71:032313, 2005.

\bibitem{Nielsen_2002}
M.~A. Nielsen.
\newblock A simple formula for the average gate fidelity of a quantum dynamical
  operation.
\newblock {\em Phys. Lett. A}, 303:249--252, 2002.

\bibitem{Magesan_2011}
E.~Magesan, R.~Blume-Kohout, and J.~Emerson.
\newblock Gate fidelity fluctuations and quantum process invariants.
\newblock {\em Phys. Rev. A}, 84:012309, 2011.

\bibitem{Pedersen_2008}
L.~H. Pedersen, N.~M. Moller, and K.~M{\o}lmer.
\newblock The distribution of quantum fidelities.
\newblock {\em Phys. Lett. A}, 372:7028--7032, 2008.

\bibitem{Magesan_2011v2}
E.~Magesan.
\newblock Depolarizing behavior of quantum channels in higher dimensions.
\newblock {\em Quantum Inf. Comput.}, 11:0466--0484, 2011.

\bibitem{Siudzinska_2019}
K.~Siudzi{\'n}ska.
\newblock Regularized maximal fidelity of the generalized {Pauli} channels.
\newblock {\em Phys. Rev. A}, 99:012340, 2019.

\bibitem{Johnston_2010}
N.~Johnston and D.~W. Kribs.
\newblock A family of norms with applications in quantum information theory.
\newblock {\em J. Math. Phys.}, 51:082202, 2010.

\bibitem{Johnston_2011v2}
N.~Johnston and D.~W. Kribs.
\newblock A family of norms with applications in quantum information theory
  {II}.
\newblock {\em Quantum Inf. Comput.}, 11:104--123, 2011.

\bibitem{Johnston_2011}
N.~Johnston and D.~W. Kribs.
\newblock Quantum gate fidelity in terms of {Choi} matrices.
\newblock {\em J. Phys. A: Math. Theor.}, 44:495303, 2011.

\bibitem{Gregor_2018}
C.~Gregor.
\newblock Construction of unitary matrices and bounding minimal quantum gate
  fidelity using genetic algorithms, 2018.
\newblock Master Thesis, University of Guelph.

\bibitem{Reich_2013}
D.~M. Reich, G.~Gualdi, and Ch.~P. Koch.
\newblock Optimal strategies for estimating the average fidelity of quantum
  gates.
\newblock {\em Phys. Rev. Lett.}, 111:200401, 2013.

\bibitem{Ernst_2017}
M.~F. Ernst and R.~Klesse.
\newblock Regularized maximum pure-state input-output fidelity of a quantum
  channel.
\newblock {\em Phys. Rev. A}, 96:062319, 2017.

\bibitem{Bonsall_1971}
F.~F. Bonsall and J.~Duncan.
\newblock {\em Numerical Ranges of Operators on Normed Spaces and of Elements
  of Normed Algebras}.
\newblock Cambridge University Press, 1971.

\bibitem{Gustafson_1997}
K.~E. Gustafson and D.~K.~M. Rao.
\newblock {\em Numerical Range, The field of values of linear operators and
  matrices}.
\newblock Springer, 1997.

\bibitem{Gallay_2012}
T.~Gallay and D.~Serre.
\newblock Numerical measure of a complex matrix.
\newblock {\em Commun. Pure Appl. Math.}, 65:287--336, 2012.

\bibitem{Dunkl_2011}
C.~F. Dunkl, P.~Gawron, J.~A. Holbrook, J.~A. Miszczak, Z.~Puchała, and
  K.~\.Zyczkowski.
\newblock Numerical shadow and geometry of quantum states.
\newblock {\em J. Phys. A: Math. Theor.}, 44:335301, 2011.

\bibitem{Gutkin_2013}
E.~Gutkin and K.~\.Zyczkowski.
\newblock Joint numerical ranges, quantum maps, and joint numerical shadows.
\newblock {\em Linear Algebra Appl.}, 438:2394--2404, 2013.

\bibitem{Sidhu_2020}
J.~S. Sidhu and P.~Kok.
\newblock Geometric perspective on quantum parameter estimation.
\newblock {\em {AVS} Quantum Science}, 2:014701, 2020.

\bibitem{DeMartini_2003}
F.~D. Martini, A.~Mazzei, M.~Ricci, and \~G.~M. D'Ariano.
\newblock Exploiting quantum parallelism of entanglement for a \ complete
  experimental quantum characterization of a single-qubit \ device.
\newblock {\em Physical Review A}, 67:062307, 2003.

\bibitem{Sarovar_2006}
M.~Sarovar and G.~J. Milburn.
\newblock Optimal estimation of one-parameter quantum channels.
\newblock {\em Journal of Physics A: Mathematical and General}, 39:8487--8505,
  2006.

\bibitem{Liang_2002}
X.~T. Liang.
\newblock Strategies for estimating quantum lossy channels.
\newblock {\em arXiv:quant-ph/0205121}, 2002.

\bibitem{Kitaev_1996}
A.~Y. Kitaev.
\newblock Quantum measurements and the {Abelian} stabilizer problem.
\newblock {\em Electronic Colloquium on Computational Complexity}, 3:1–22,
  1996.

\bibitem{Rudolph_2003}
T.~Rudolph and L.~Grover.
\newblock Quantum communication complexity of establishing a shared \ reference
  frame.
\newblock {\em Physical Review Letters}, 91:217905, 2003.

\bibitem{Fujiwara_2003}
A.~Fujiwara and H.~Imai.
\newblock Quantum parameter estimation of a generalized {Pauli} channel.
\newblock {\em Journal of Physics A: Mathematical and General}, 36:8093--8103,
  2003.

\bibitem{Fujiwara_2004}
A.~Fujiwara.
\newblock Estimation of a generalized amplitude-damping channel.
\newblock {\em Physical Review A}, 70:012317, 2004.

\bibitem{Sasaki_2002}
M.~Sasaki, M.~Ban, and S.~M. Barnett.
\newblock Optimal parameter estimation of a depolarizing channel.
\newblock {\em Physical Review A}, 66:022308, 2002.

\bibitem{Zheng_2014}
Q.~Zheng, Y.~Yao, and Y.~Li.
\newblock Optimal quantum channel estimation of two interacting \ qubits
  subject to decoherence.
\newblock {\em The European Physical Journal D}, 68:170, 2014.

\bibitem{Hendrych_2003}
M.~Hendrych, M.~Dušek, R.~Filip, and J.~Fiurášek.
\newblock Simple optical measurement of the overlap and fidelity of quantum
  states.
\newblock {\em Physics Letters A}, 310:95--100, 2003.

\bibitem{Wang_2023}
Q.~Wang, Z.~Zhang, K.~Chen, J.~Guan, W.~\~Fang, J.~Liu, and M.~Ying.
\newblock Quantum algorithm for fidelity estimation.
\newblock {\em {IEEE} Transactions on Information Theory}, 69:273--282, 2023.

\bibitem{Cerezo_2020}
M.~Cerezo, A.~Poremba, L.~Cincio, and P.~\~J. Coles.
\newblock Variational quantum fidelity estimation.
\newblock {\em Quantum}, 4:248, 2020.

\bibitem{Zhang_2021}
X.~Zhang, M.~Luo, Z.~Wen, Q.~Feng, \~S. Pang, W.~Luo, and X.~Zhou.
\newblock Direct fidelity estimation of quantum states using machine \
  learning.
\newblock {\em Physical Review Letters}, 127:130503, 2021.

\bibitem{urRehman_2022}
J.~ur~Rehman and H.~Shin.
\newblock {Machine Learning Assisted Quantum Channel Estimation}.
\newblock {\em {Proceedings of Conference of Korean Society for
  Communications}}, pages {660--661}, {2022}.

\bibitem{Flammia_2011}
S.~T. Flammia and Y.~Liu.
\newblock Direct fidelity estimation from few {Pauli} measurements.
\newblock {\em Physical Review Letters}, 106:230501, 2011.

\bibitem{daSilva_2011}
M.~P.~d. Silva, O.~Landon-Cardinal, and D.~Poulin.
\newblock Practical characterization of quantum devices without \ tomography.
\newblock {\em Physical Review Letters}, 107:210404, 2011.

\bibitem{Huang_2020}
H.~Huang, R.~Kueng, and J.~Preskill.
\newblock Predicting many properties of a quantum system from very \ few
  measurements.
\newblock {\em Nature Physics}, 16:1050--1057, 2020.

\bibitem{Miszczak_2009}
J.~A. Miszczak, Z.~Puchała, P.~Horodecki, A.~Uhlmann, and K.~Życzkowski.
\newblock Sub- and super-fidelity as bounds for quantum fidelity.
\newblock {\em Quantum Information \& Computation}, 9:103–130, 2009.

\bibitem{Guhne_2007}
O.~G\"{u}hne, C.~Lu, W.~Gao, and J.~Pan.
\newblock Toolbox for entanglement detection and fidelity estimation.
\newblock {\em Physical Review A}, 76:030305, 2007.

\bibitem{Kippenhahn_1951}
R.~Kippenhahn.
\newblock {Über den Wertevorrat einer Matrix}.
\newblock {\em Math. Nachr.}, 6:193--228, 1951.

\bibitem{Bengtsson_Zyczkowski_geometry}
I.~Bengtsson and K.~{\.Z}yczkowski.
\newblock {\em Geometry of Quantum States: An Introduction to Quantum
  Entanglement}.
\newblock Cambridge University Press, 2 edition, 2017.

\bibitem{Girard_2022}
M.~Girard, D.~Leung, J.~Levick, C.-K. Li, V.~Paulsen, Y.~T. Poon, and
  J.~Watrous.
\newblock On the mixed-unitary rank of quantum channels.
\newblock {\em Commun. Math. Phys.}, 394:919–951, 2022.

\bibitem{Arqand_2020}
A.~Arqand, L.~Memarzadeh, and S.~Mancini.
\newblock Quantum capacity of a bosonic dephasing channel.
\newblock {\em Phys. Rev. A}, 102:042413, 2020.

\bibitem{Rexiti_2022}
M.~Rexiti, L.~Memarzadeh, and S.~Mancini.
\newblock Discrimination of dephasing channels.
\newblock {\em Journal of Physics A: Mathematical and Theoretical},
  55(24):245301, 2022.

\bibitem{Dunkl_2015}
C.~F. Dunkl, P.~Gawron, \L{}. Pawela, Z.~Pucha\l{}a, and K.~\.Zyczkowski.
\newblock {Real numerical shadow and generalized B-splines}.
\newblock {\em Linear Algebra Appl.}, 479:12--51, 2015.

\bibitem{Stein_1966}
P.~Stein.
\newblock A note on the volume of a simplex.
\newblock {\em Amer. Math. Monthly}, 73:299–301, 1966.

\end{thebibliography}
\end{document}